\newcommand{\p}[1]{\mathop{\mbox{\it p} } }
\renewcommand{\vec}[1]{\ensuremath{\boldsymbol{#1}}}
\newcommand{\be}{\begin{equation}}
\newcommand{\ee}{\end{equation}}
\newcommand{\ba}{\begin{array}}
\newcommand{\ea}{\end{array}}
\newcommand{\bea}{\begin{eqnarray}}
\newcommand{\eea}{\end{eqnarray}}
\newcommand{\bean}{\begin{eqnarray*}}
\newcommand{\eean}{\end{eqnarray*}}
\newcommand{\argmin}{\mathop{\arg\min}}
\newcommand{\rmh}{^{\dag}}
\newcommand{\rmt}{^{\rm T}}
\newcolumntype{L}[1]{>{\raggedright\let\newline\\\arraybackslash\hspace{0pt}}m{#1}}
\newcolumntype{C}[1]{>{\centering\let\newline\\\arraybackslash\hspace{0pt}}m{#1}}
\newcolumntype{R}[1]{>{\raggedleft\let\newline\\\arraybackslash\hspace{0pt}}m{#1}}
\definecolor{white}{rgb}{1,1,1}
\newtheorem{theorem}{Theorem}
\newtheorem{lemma}{Lemma}
\newtheorem{example}{Example}
\newtheorem{property}{Property}
\newtheorem{corollary}{Corollary}
\begin{document}

\title{A Generalized Zero-Forcing Precoder with Successive Dirty-Paper Coding in MISO Broadcast Channels}
\author
{
Sha Hu and Fredrik Rusek
\thanks{The authors are with the Department of Electrical and Information Technology, Lund University, Lund, Sweden (\{firstname.lastname\}@eit.lth.se).}
}

\maketitle

\vspace{-4mm}
\begin{abstract}
In this paper, we consider precoder designs for multiuser multiple-input-single-output (MISO) broadcasting channels. Instead of using a traditional linear zero-forcing (ZF) precoder, we propose a generalized ZF (GZF) precoder in conjunction with successive dirty-paper coding (DPC) for data-transmissions, namely, the GZF-DP precoder, where the suffix \lq{}DP\rq{} stands for \lq{}dirty-paper\rq{}. The GZF-DP precoder is designed to generate a band-shaped and lower-triangular effective channel $\vec{F}$ such that only the entries along the main diagonal and the $\nu$ first lower-diagonals can take non-zero values. Utilizing the successive DPC, the known non-causal inter-user interferences from the other (up to) $\nu$ users are canceled through successive encoding. We analyze optimal GZF-DP precoder designs both for sum-rate and minimum user-rate maximizations. Utilizing Lagrange multipliers, the optimal precoders for both cases are solved in closed-forms in relation to optimal power allocations. For the sum-rate maximization, the optimal power allocation can be found through water-filling, but with modified water-levels depending on the parameter $\nu$. While for the minimum user-rate maximization that measures the quality of the service (QoS), the optimal power allocation is directly solved in closed-form which also depends on $\nu$. Moreover, we propose two low-complexity user-ordering algorithms for the GZF-DP precoder designs for both maximizations, respectively. We show through numerical results that, the proposed GZF-DP precoder with a small $\nu$ ($\leq\!3$) renders significant rate increments compared to the previous precoder designs such as the linear ZF and user-grouping based DPC (UG-DP) precoders.
\end{abstract}

\begin{IEEEkeywords}
Precoder design, zero-forcing (ZF), dirty-paper coding (DPC), broadcasting channel, multi-user, multiple-input-single-output (MISO), inter-user interference, water-filling, sum-rate maximization, minimum user-rate maximization, user-ordering.
\end{IEEEkeywords}

\section{Introduction}
In the emerging Internet of things (IoT) \cite{IoT} and device-to-device (D2D) \cite{HR161} communication systems, a transmit node equipped with $M$ transmit antennas may broadcast messages simultaneously to $N$ low-cost receive nodes that are equipped with a single antenna. Under the assumption that the number of transmit antennas are much larger than the number of served users, i.e., $M\!\gg\!N$, which is known as massive multiple-input-multiple-output (MIMO) systems\cite{5G1}, the multiple-input-single-output (MISO) broadcasting channels corresponding to different users that link the transmit and receive nodes are approximately orthogonal to each other. Consequently, the zero-forcing (ZF) precoders applied at the transmit nodes can efficiently eliminate the inter-user interference, and the MISO channels can be decomposed into a number of parallel and independent single-input-single-output (SISO) channels in such cases. 

In small-antenna systems such as small cells \cite{5G2} with compact base-stations and WiFi systems, however, compared to the number of served users the number of transmit antennas are usually limited. Further, in current 3GPP standard \cite{3gpp36201}, LTE-A systems support only up to 8 transmit antennas. Although future releases may support massive-MIMO or full-dimension MIMO (FD-MIMO)\cite{FDMIMO} and the number of transmit antennas at the eNode-B may increase to 64 for 2-D antenna array designs, the intended number of served users will also increase due to the vast connections featured in 5G systems. Consider the case where $N$ is comparable to $M$, in order to fully eliminate the inter-user interference, the linear ZF precoder performs poorly due to the non-orthogonality of the MISO broadcast channel vectors. Therefore, advanced precoder designs are required to improve the transmit power-efficiency and increase the rates of data-transmissions. 

Some of the typical precoder designs are to preserve parts of the inter-user interference and mitigate them with the techniques of channel coding with side information (CCSI). CCSI has generated much research interests due to its applications in data hiding \cite{PA99}, precoding for interference channels \cite{CS03}, and transmitter cooperation in Ad-hoc networks\cite{M06}. Gelfand and Pinsker in \cite{GP80} derive the capacity of a single-user memoryless channel with an additive interference signal $\vec{s}$ known to the transmitter, but not the receiver. Consider a received signal
\bea \label{md0} \vec{y}=\vec{x}+\vec{s}+\vec{z}, \eea
where $\vec{x}$, $\vec{y}$ are transmit and receive signals, and $\vec{z}$ is the unknown Gaussian noise, respectively. The capacity of model (\ref{md0}) is shown to equal
\bea \label{GP} \mathcal{C}=\max_{p\left(\vec{u}, \vec{x}|\vec{s}\right)}\left\{I(\vec{u};\vec{y})-I(\vec{u};\vec{s})\right\} .  \eea
where $\vec{u}$ is an auxiliary random variable and the maximum is taken over all joint probability distributions. Based on the result (\ref{GP}), Costa shows in \cite{C83} that with dirty-paper coding (DPC), the channel capacity $ \mathcal{C}$ is the same even if the interference $\vec{s}$ is not present. Utilizing the same principle, the DPC scheme can be extended to multi-user Gaussian vector broadcast channels\cite{GK11}, and DPC capacity regions have been derived via the uplink-downlink duality between broadcast channels and multiple-access channels \cite{VT03, JV04}. Practical DPC designs based on finite-alphabets have been extensively developed such as Tomlinson-Harashima precoding \cite{SL09}, Lattice Precoding \cite{ER05}, and trellis coded quantization and modulation \cite{ES05, YV05}.

Caire and Shamai in \cite{CS03} propose a ZF based DPC (ZF-DP) design for MISO broadcast channels. They show that with successive DPC utilized at transmitter, the sum-rate of the ZF-DP precoder is close to the optimal DPC. In \cite{DL07}, the authors propose a successive ZF-DP (SZF-DP) precoding scheme and show that in the low SNR regime, the SZF-DP has similar performance as a successive ZF (SZF) precoder, where the SZF-DP and SZF precoders are direct extensions of the ZF-DP and linear ZF precoders in \cite{CS03} for MIMO broadcast channels. In \cite{TJBB131, TJBB132} the authors further extend the ZF-DP and SZF-DP precoders subject to per-antenna power constraint (PAPC) instead of a sum-power constraint (SPC). Nevertheless, all the successive DPC based precoder designs in \cite{CS03, DL07, TJBB131, TJBB132} assume a full successive DPC scheme. As the number of users $N$ increases, the successive DPC becomes prohibitive as it needs to consider the inter-user interference up to $N\!-\!1$ users. Recently, the authors in \cite{ML16} propose a user-group based DPC precoder (UG-DP), which splits the $N$ users into $g$ disjoint groups with each group containing $N_g$ users\footnote{For notational convenience, we assume that $N$ is divisible by $g$ and let $N_g\!=\!N/g$. But it can be straightforwardly modified to other cases with minor changes.}. The inter-group interferences are eliminated by the precoder, while the intra-group interferences are canceled with successive DPC that is implemented on each user-group independently. With a small $N_g$, the DPC has less-complexity and is feasible \cite{SL09, ER05, ES05, YV05, ML16, LJ07}. However, as different user-groups are orthogonalized to each other, the UG-DP also suffers from rate-losses, especially when the channel vectors of different user-groups are spatially correlated.

In this work, we propose a generalized ZF precoder (GZF) design in conjunction with successive DPC, namely, the GZF-DP precoder, which unifies the designs of the UG-DP and the ZF-DP precoders. Instead of considering $N\!-\!1$ users in previous designs, we consider inter-user interference up to $\nu$ users, where the parameter $\nu$ is up to design and provides a trade-off between the rates and implementation complexity of the successive DPC\footnote{Instead of using DPC at transmitter, in cooperative networks \cite{NH04} the receiver nodes can implement successive interference cancellations (SIC) to achieve the same rates as the DPC. However, that requires a cost of communicating between the receive nodes. In which case, the parameter $\nu$ represents a maximal number of communication channels needed for the receive nodes.}. By setting $\nu\!=\!0$, the GZF-DP precoder degrades to the linear ZF precoder, which has low complexity (no DPC is needed) but also low rates. On the other hand, with setting $\nu\!=\!N\!-\!1$, the GZF-DP precoder is identical to the ZF-DP precoder \cite{CS03}, which performs better than the other settings of $\nu$ but also has the highest DPC implementation complexity. Moreover, as the UG-DP precoder can be viewed as a special case of the GZF-DP precoder, it renders lower rates than the GZF-DP precoder with $\nu\!=\!N_g\!-\!1$. 

With the GZF-DP precoder, we consider two optimal designs: sum-rate maximization and minimum user-rate maximization, that are aiming to maximize the overall throughput and the quality of service (QoS), respectively. Using Lagrange multipliers, the optimal GZF-DP precoder designs for both cases are found in closed-form which depend on optimal power-allocations. For the sum-rate maximization, the optimal power allocation is found through a water-filling scheme in relation to modified water-levels introduced by preserving the inter-user interference up to $\nu$ users. While for the minimum user-rate maximization, the optimal power allocation can be solved directly in closed-form which also depends on $\nu$. Moreover, we provide two low-complexity algorithms for optimal user-orderings for both maximizations, respectively. We show through numerical results that, the proposed GZF-DP precoder is superior to the previous ZF and UG-DP precoders, and most interestingly, with a small value of $\nu$ ($\leq \!3$) the proposed GZF-DP precoder performs close to the ZF-DP precoder\cite{JV04}, i.e., the GZF-DP precoder with $\nu\!=\!N\!- \!1$.

Notice that, as the precoder designs in \cite{DL07, TJBB131, TJBB132} follow similar approaches as those in \cite{CS03}, the proposed GZF-DP precoder can also be extended to MIMO broadcast channels and PAPC constraint, which is a generalization of the SZF-DP precoder by only performing DPC up to $\nu$ multiple-receive-antenna users. However, as in \cite{DL07, TJBB131, TJBB132} only the sum-rate maximization with a full DPC is considered, an interesting fact that the sum-rate maximization actually sacrifices the user-rates of some of the last users (corresponding to the last columns of channel matrix $\vec{H}$) compared to the linear ZF precoder is not shown. With the variable $\nu$ increasing from 0 to $N\!-\!1$, this property is clear shown in this work, which also motivates us to consider the minimum user-rate maximization for the proposed GZF-DP precoder.

The rest of the paper are organized as follows. In Sec. II, we briefly introduce the MISO system model and the previous precoder designs. In Sec. III, we elaborate the proposed GZF-DP precode designs in detail for sum-rate and minimum user-rate maximizations, respectively. We also analyze the low-complexity ordering algorithms for both maximization problems. Empirical results are provided in Sec. IV, and Sec. V summarizes the paper.

\subsection*{Notations:}
Throughout this paper, superscripts $(\cdot)^{-1}$, $(\cdot)^{1/2}$, $(\cdot)^{\ast}$,
$(\cdot)\rmt$ and $(\cdot)\rmh$ stand for the inverse, matrix square root, complex conjugate, transpose, and
Hermitian transpose, respectively. Boldface letters indicate vectors and boldface uppercase letters designate matrices. We also reserve $a_{m,n}$ to denote the element at the $m$th row and $n$th column of matrix $\vec{A}$, $a_{m}$ to denote the $m$th element of vector $\vec{a}$, and $\vec{I}$ to represent the identity matrix. The operators $\mathcal{R}\{\cdot\}$ and $\mathrm{Tr}(\cdot)$ take the real part and the trace of the arguments, and $[\cdot]^{+}$ is the non-negative protection. In addition, $\mathcal{J}_1\!\setminus\mathcal{J}_2$ returns a set that contains all elements in set $\mathcal{J}_1$ that are not in $\mathcal{J}_2$, and the expressions $\vec{A}\!\succ\!\vec{B}$ and $\vec{A}\!\succeq\!\vec{B}$ represent that $\left(\vec{A}\!-\!\vec{B}\right)$ is positive definite and semi-positive definite, respectively.

\section{System Model and Previous Sum-rate Maximization Precoder Designs}

Consider an MISO system with an $M$-antenna transmitter and $N$ single-antenna users with assumption $M\!\geq\!N$. The channel vector from the transmitter to the $n$th user is denoted as $\vec{h}_n\!\in\!\mathbb{C}^{M \times 1}$, and the $m$th entry $h_{mn}$ of $\vec{h}_n$ is the channel gain from the $m$th transmit antenna to the $n$th user. Denote the $N\!\times\!M$ channel
\bea \vec{H}=[\vec{h}_1 \; \vec{h}_2 \;\ldots \;\vec{h}_N]\rmt, \eea
and let the $N\!\times\!1$ vectors
{\setlength\arraycolsep{2pt}  \bea \vec{y}&=&[y_1 \; y_2 \;\ldots \; y_N]\rmt, \notag \\
  \vec{x}&=&[x_1 \; x_2 \;\ldots \; x_N]\rmt, \notag \\
 \vec{z}&=&[z_1 \; z_2 \;\ldots \; z_N]\rmt, \eea}
\hspace{-1.4mm}where $x_n$ is the DPC-encoded symbol of the $n$th user that cancels the non-causal interference from the other users, and $y_n$, $z_n$ is the received sample and the noise term corresponding to the $n$th user, respectively. With an $M\!\times\!N$ precoding matrix $\vec{P}$ applied at the transmitter, the received signals at the $N$ autonomous users can be compactly written as 
\bea \label{md1} \vec{y} =\vec{H}\vec{P}\vec{x}+\vec{z},\eea
where the noise term $\vec{z}$ comprises identical and independently distributed (IID) complex Gaussian variables with zero mean and a covariance matrix $N_0\vec{I}$. The transmit symbols $x_n$ are uncorrelated due to DPC encoding and have unit-transmit power, that is, $\mathbb{E}[\vec{x}\vec{x}\rmh]\!=\!\vec{I}$. In addition, the transmit node is subject to a total transmit power constraint $P_{\mathrm{T}}$ such that
\bea \label{con1} \mathrm{Tr}\left(\vec{P}\vec{P}\rmh\right)\leq P_{\mathrm{T}}.\eea

\subsection{Optimal DPC Precoder}
Denote the effective channel $\vec{F}\!=\!\vec{H}\vec{P}$, the interference channel corresponding to each of the $N$ users from (\ref{md1}) can be written as
\bea   y_n= f_{n,n}x_n+\sum_{k=1}^{n-1}f_{n,k}x_k+\sum_{k=n+1}^{N}f_{n,k}x_k+z_n.   \eea
With a successive DPC \cite{C83} encoding scheme, the interference term $\sum\limits_{k=1}^{n-1}f_{n,k}x_k$ is non-causally known and canceled, while the causal interference term $\sum\limits_{k=n+1}^{N}f_{n,k}x_k$ is regarded as additive noise. Therefore, the optimal DPC precoder that maximizes the sum-rate is designed by solving the following problem
{\setlength\arraycolsep{2pt} \bea \label{DP} &&\underset{\vec{F}}{\mathrm{maximize\;\;}} \sum_{n=1}^{N}\log\left(1+\frac{|f_{n,n}|^2}{N_0+\sum\limits_{k=n+1}^{N}|f_{n,k}|^2}\right) \notag   \\
&& \mathrm{subject\; to\;\;}(\ref{con1}).  \eea}
\hspace{-1.4mm}Directly optimizing (\ref{DP}) is computationally complex as it is a non-convex problem. In \cite{JV05} the authors propose an iterative water-filling scheme to solve (\ref{DP}) based on the uplink-downlink duality. Although the optimal DPC precoder achieves the capacity region \cite{WY06} of the multi-user MISO broadcast channels, the linear ZF precoder is widely used due to its simple implementation.

\subsection{Linear ZF Precoder}
The linear ZF precoder is set to
\bea \label{zf} \vec{P}=\vec{H}\rmh\left(\vec{H}\vec{H}\rmh\right)^{-1}\vec{F},\eea
where $\vec{F}$ is an $N\!\times\!N$ diagonal matrix. With (\ref{zf}), the constraint (\ref{con1}) changes to
\bea  \label{con2} \mathrm{Tr}\left(\vec{F}\rmh\left(\vec{H}\vec{H}\rmh\right)^{-1}\vec{F}\right)\leq P_{\mathrm{T}}.\eea
Denote $\vec{G}\!=\!\left(\vec{H}\vec{H}\rmh\right)^{\!-1}$, the sum-rate maximization for linear ZF precoder is then formulated as
{\setlength\arraycolsep{2pt} \bea \label{prbm1} &&\underset{f_{n,n}}{\mathrm{maximize\;\;}}  R=\sum_{n=1}^{N}\log\left(1+\frac{|f_{n,n}|^2}{N_0}\right) \notag   \\
&&\mathrm{subject\; to\;\;}   \sum_{n=1}^{N}g_{n,n}|f_{n,n}|^2\leq P_{\mathrm{T}}.\eea}
\hspace{-1.4mm}The optimal power allocation is found through the water-filling scheme,
\bea  \label{wf1} |f_{n,n}|^2=N_0\left[\frac{1}{\lambda g_{n,n}}-1\right]^{+}\!,\eea
where $\lambda\!\geq\!0$ is a constant such that power constraint (\ref{con2}) is satisfied. The optimal sum-rate reads
\bea  \label{sumrate1} R^{\mathrm{sum}}=\sum_{n=1}^{N}\big[-\log\left( \lambda g_{n,n}\right)\big]^{+},\eea
As the linear ZF precoder completely eliminates the inter-user interference, it results in low transmit power-efficiencies (even with regularizations\cite{ZF2}), especially when $\vec{H}$ is ill-conditioned. In \cite{CS03}, the authors propose a ZF-DP precoder that only nulls out the causal inter-user interference through ZF, and utilize successive DPC to cancel the non-causal interference.

\subsection{ZF-DP Precoder}

Assuming the channel decomposition $\vec{H}\!=\!\vec{R}\vec{U}$, where $\vec{R}$ is an $N\!\times\!N$ lower-triangular matrix and $\vec{U}$ is an $N\!\times\!M$ unitary matrix, the ZF-DP precoder is set to $\vec{P}\!=\!\vec{U}\rmh\vec{B}$, and the $N\!\times\!N$ diagonal matrix $\vec{B}$ represents the power allocation whose $n$th diagonal element is $b_n$. The effective channel with the ZF-DP precoder equals $\vec{F}\!=\!\vec{R}\vec{B}$, and the received sample $y_n$ reads
\bea   y_n= f_{n,n}x_n+\sum_{k=1}^{n-1}f_{n,k}x_k+z_n.   \eea
Through successive DPC encoding, the non-casual interference $\sum\limits_{k=1}^{n-1}f_{n,k}x_k$ is nulled out for each of the users, and the sum-rate maximization problem can be formulated as
{\setlength\arraycolsep{2pt} \bea \label{prbm2} &&\underset{b_n}{\mathrm{maximize\;\;}}  \sum_{n=1}^{N}\log\left(1+\frac{|b_n r_{n,n}|^2}{N_0}\right) \notag   \\
&&\mathrm{subject\; to\;\;}  \sum_{n=1}^{N}b_n^2\leq P_{\mathrm{T}}.\eea}
\hspace{-1.4mm}The optimal power allocation $b_n$ can also be found through standard water-filling. Although the ZF-DP precoder renders promising performance, the implementation of successive DPC becomes over complex when $N$ is large. To reduce the DPC complexity, the authors in \cite{ML16} propose a low-complexity UG-DP precoder. 

\subsection{UG-DP Precoder}
We next briefly introduce the UG-DP precoder design. Assuming the same channel decomposition as with ZF-DP precoder, but now we constrain $\vec{R}$ to be block-diagonal, with each block $\vec{R}_k$ ($1\!\leq\!k\!\!\leq\!g$) being an $N_g\!\times\!N_g$ lower-triangular matrix. Let the $N_g\!\times\!M$ sub-matrix $\vec{H}_k$ comprise the row vectors in $\vec{H}$ corresponding to the users in the $k$th group, and the $(N\!-\!N_g)\!\times\!M$ sub-matrix $\bar{\vec{H}}_k$ comprise the remaining row vectors. With decomposition $\vec{U}\!=\!\left[\vec{U}_1,\vec{U}_2,\cdots,\vec{U}_g\right]\rmh$, each $M\!\times\!N_g$ component $\vec{U}_k$ can be obtained through
\bea \label{QR2}\vec{H}_k\left(\vec{I}-\bar{\vec{H}}_k\rmh\left(\bar{\vec{H}}_k\bar{\vec{H}}_k\rmh\right)^{-1}\bar{\vec{H}}_k\right)\!=\!\vec{R}_k\vec{U}_k\rmh. \eea
Then, with the matrix $\vec{U}_k$ calculated via (\ref{QR2}), the optimal $\vec{P}$ equals $\vec{P}\!=\!\vec{U}\rmh\vec{B}$ and the effective channel becomes $\vec{F}\!=\!\vec{R}\vec{B}$, where the diagonal matrix $\vec{B}$ represents the power allocation to different users. Then, the remaining processes follow the ZF-DP precoder design. Although the UG-DP precoder reduces the complexity of DPC by user-grouping, it also suffers from rate-losses from the orthogonalization of different user-groups. In order to increase the rates of the UG-DP precoder while keeping a similar complexity, we can extend the block-diagonal lower-triangular $\vec{R}$ and $\vec{F}$ to be band-shaped matrices. That is, the connections among different user-groups are preserved such that, only the elements along the main diagonal and the first $\!N_g\!-\!1$ lower-diagonals of $\vec{R}$ and $\vec{F}$ can take non-zero values. The proposed GZF-DP precoder design is based on such a principle and is explained in detail next.

\section{Optimal Designs of the Proposed GZF-DP Precoder}
Instead of assuming $\vec{F}$ to be diagonal or block-diagonal such as in previous designs, we let $\vec{F}$ to be a band-shaped and lower-triangular for the GZF-DP precoder design,
{\setlength\arraycolsep{1pt} \bea \label{Fmat} \vec{F}\!=\!\left[\!\begin{array}{ccccccc}
f_{1,1}&~&~&~&~&~&~\\ f_{2,1}& f_{2,2}&~&~&~&~&~\\  
\vdots&f_{3,2}&\ddots&~&~&~&~ \\ 
f_{\nu+1,1}&\vdots&\ddots&\ddots&~&~&~\\~&f_{\nu+2,2}&\ddots&\ddots&\ddots&~&~\\
~&~&\ddots&\ddots&\ddots&\ddots&~\\ ~&~&~&f_{N, N-\nu}&\cdots&f_{N, N-1}&f_{N, N}
\end{array} \!\right]\!. \;\;\eea}
\hspace{-1.4mm}The parameter $\nu$ denotes the interfering depth of the effective MISO broadcasting channels. For simpler descriptions, we define two operations as
\vspace{-4mm}
{\setlength\arraycolsep{2pt}  \bea n\ominus \nu&=&\max(n-\nu, 0), \notag \\
n\boxplus\nu&=&\min(n+\nu, N).   \eea}
\hspace{-1.4mm}The GZF-DP precoder generalizes the linear ZF precoder in the sense that $\nu$ can be set larger than 0. Under the case $\nu\!=\!0$, the GZF-DP precoder degrades to the linear ZF precoder and no DPC is needed. With $\vec{F}$ defined in (\ref{Fmat}), the received sample $y_n$ of the $n$th user reads
\bea  \label{md3} y_n= f_{n,n}x_n+\sum_{k=n\ominus\nu}^{n-1}f_{n,k}x_k+z_n.  \eea
As the interference $\sum\limits_{k=n\ominus\nu}^{n-1}f_{n,k}x_k$ is non-causally known at the transmit node, we can apply the same successive DPC encoding as the ZF-DP precoder\cite{CS03} to cancel it. That is, we first encode a first user that suffers no interference from the other users after precoding. Then, the second user is encoded utilizing DPC scheme with regarding the encoded symbols from the first user as known interference. The remaining users are successively encoded in the same manner. For each of the $N$ users, as there are at most $\nu$ users to be considered in the DPC and $\nu\!\ll\!N\!-\!1$, the GZF-DP precoder renders much lower-complexity of the successive DPC operations than the ZF-DP precoder and has similar complexity as the UG-DP precoder with $\nu\!=\!N_g\!-\!1$.

Before deriving the optimal GZF-DP precoder designs, we make some useful notations. Denote the $\nu\!\times\!1$ vectors that comprise the non-zero entries on each column of $\vec{F}$ excluding the main diagonal element as
\bea \label{fn} \vec{f}_n^{\nu}&=&\left[f_{n+1, n}\;, f_{n+2, n}\;,\cdots\;,f_{n\boxplus\nu, n}\right]\rmt.\eea
Moreover, define the $(\nu\!+\!1)\!\times\!(\nu\!+\!1)$ principle sub-matrix $\vec{G}_n^{\nu}$ obtained from $\vec{G}$ as
 {\setlength\arraycolsep{3pt} \bea \label{Gn} \vec{G}_{n}^{\nu}\!=\!\left[\begin{array}{cccc}  g_{n,n}& g_{n,n+1}&\cdots&g_{n,n\boxplus\nu}\\g_{n+1,n}& g_{n+2,n+1}&\cdots&g_{n+1,n\boxplus\nu}\\ \vdots&\vdots&\vdots&\vdots \\g_{n\boxplus\nu,n}&g_{n\boxplus\nu,n+1}&\cdots&g_{n\boxplus\nu,n\boxplus\nu}\end{array} \right]\!\!.  \eea}
\hspace{-1.4mm}and let
\bea \label{gn} \vec{g}_{n}^{\nu}&=&\left[g_{n,n+1}\;,g_{n,n+2}\;,\cdots\;,g_{n,n\boxplus\nu}\right]\rmh.\eea
Then, $\vec{G}_{n+1}^{\nu-1}$ is the $\nu\!\times\!\nu$ principle sub-matrix obtained by further removing the first row and column vectors from $\vec{G}_{n}^{\nu}$.
 
\subsection{Sum-rate Maximization}
We first consider the GZF-DP precoder design for the sum-rate maximization subject to the transmit power constraint (\ref{con2}). The problem can be formulated as
{\setlength\arraycolsep{2pt} \bea \label{prbm3} &&\underset{\vec{F}}{\mathrm{maximize\;\;}}  \sum_{n=1}^{N}\log\left(1+\frac{|f_{n,n}|^2}{N_0}\right) \notag   \\
&&\mathrm{subject\; to\;\;}  \mathrm{Tr}\left(\vec{F}\rmh\vec{G}\vec{F}\right)= P_{\mathrm{T}}.\eea}
\hspace{-1.4mm}Note that, we have changed the power constraint in (\ref{prbm3}) from $\mathrm{Tr}\left(\vec{F}\rmh\vec{G}\vec{F}\right)\!\leq\! P_{\mathrm{T}}$ to $\mathrm{Tr}\left(\vec{F}\rmh\vec{G}\vec{F}\right)\!=\! P_{\mathrm{T}}$. The reason is that, for a solution of (\ref{prbm3}), the equality of the power constraint always holds. This is so, since if $\mathrm{Tr}\left(\vec{F}\rmh\vec{G}\vec{F}\right)\!<\! P_{\mathrm{T}}$ holds, we can scale up $\vec{F}$ to be some $\tilde{\vec{F}}\!=\!\alpha\vec{F}$ ($\alpha\!>\!1$) such that $\mathrm{Tr}\left(\tilde{\vec{F}}\rmh\vec{G}\tilde{\vec{F}}\right)\!=\! P_{\mathrm{T}}$ holds, and with $\tilde{\vec{F}}$ the sum-rate in (\ref{prbm3}) is also increased. By constraining $f_{n,n}\!\geq\!0$, the optimal solution for (\ref{prbm3}) is stated in Theorem 1.
\begin{theorem}
The optimal band-shaped and low-triangular matrix $\vec{F}$ as defined in (\ref{Fmat}) for sum-rate maximization (\ref{prbm3}) satisfies the following conditions
 {\setlength\arraycolsep{2pt}\bea  \label{optfnk} \vec{f}_n^{\nu}=-f_{n,n}\left(\vec{G}_{n+1}^{\nu-1}\right)^{-1}\vec{g}_{n}^{\nu}, \\
\label{wf2} f_{n,n}=\sqrt{N_0\left[\frac{1}{\lambda\hat{g}_n^{\nu}}-1\right]^{+}}, \eea}
\hspace{-1.4mm}where
\bea \label{hatg} \hat{g}_{n}^{\nu}=g_{n,n}-\left(\vec{g}_n^{\nu} \right)\rmh\left(\vec{G}_{n+1}^{\nu-1}\right)^{-1}\vec{g}_{n}^{\nu},\eea
and $\lambda\!>\!0$ is a constant such that the transmit power constraint is satisfied.
\end{theorem}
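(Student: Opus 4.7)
The plan is to decouple the problem into two stages: first minimize the quadratic form $\mathrm{Tr}(\vec{F}^{\dag}\vec{G}\vec{F})$ over the off-diagonal entries of $\vec{F}$ with the diagonal entries $\{f_{n,n}\}$ held fixed, then solve for the optimal diagonal entries via a water-filling argument. The decoupling works because the objective in (\ref{prbm3}) only depends on $|f_{n,n}|^2$, whereas the off-diagonal entries appear solely in the power constraint. Moreover, because $\vec{F}$ is band-shaped and lower-triangular, the free off-diagonal entries of different columns are disjoint, so the minimization in the first stage separates column by column.

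First I would expand the trace as $\mathrm{Tr}(\vec{F}^{\dag}\vec{G}\vec{F})=\sum_{n=1}^{N}\vec{v}_n^{\dag}\vec{G}_n^{\nu}\vec{v}_n$, where $\vec{v}_n=[f_{n,n};\,\vec{f}_n^{\nu}]$ collects the nonzero entries of the $n$th column of $\vec{F}$ and $\vec{G}_n^{\nu}$ is the principal submatrix defined in (\ref{Gn}). Using the block partition
\begin{equation*}
\vec{G}_n^{\nu}=\begin{bmatrix}g_{n,n}&(\vec{g}_n^{\nu})^{\dag}\\ \vec{g}_n^{\nu}&\vec{G}_{n+1}^{\nu-1}\end{bmatrix},
\end{equation*}
which is valid by (\ref{gn}) and the Hermiticity of $\vec{G}$, I would take the Wirtinger derivative of $\vec{v}_n^{\dag}\vec{G}_n^{\nu}\vec{v}_n$ with respect to $(\vec{f}_n^{\nu})^{\ast}$ and set it to zero. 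Since $\vec{G}_{n+1}^{\nu-1}$ inherits positive definiteness from $\vec{G}$, it is invertible and the unique minimizer is $\vec{f}_n^{\nu}=-f_{n,n}(\vec{G}_{n+1}^{\nu-1})^{-1}\vec{g}_n^{\nu}$, establishing (\ref{optfnk}).

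Substituting this minimizer back into the quadratic form produces the Schur complement, yielding $\vec{v}_n^{\dag}\vec{G}_n^{\nu}\vec{v}_n=|f_{n,n}|^2\hat{g}_n^{\nu}$ with $\hat{g}_n^{\nu}$ as in (\ref{hatg}). Hence problem (\ref{prbm3}) collapses to the scalar program
\begin{equation*}
\underset{\{f_{n,n}\geq 0\}}{\mathrm{maximize}}\;\sum_{n=1}^{N}\log\!\left(1+\frac{|f_{n,n}|^2}{N_0}\right)\quad\text{s.t.}\quad\sum_{n=1}^{N}\hat{g}_n^{\nu}|f_{n,n}|^2=P_{\mathrm{T}}.
\end{equation*}
This is a concave maximization in $p_n:=|f_{n,n}|^2$ with a single linear constraint; the KKT stationarity condition $1/(N_0+p_n)=\lambda\hat{g}_n^{\nu}$ together with $p_n\geq 0$ gives the modified water-filling solution (\ref{wf2}), where $\lambda>0$ is chosen to meet the power budget.

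The only subtle point is justifying the column-wise decoupling of the first stage: one must notice that the set of free off-diagonal variables contributed by column $n$, namely $\vec{f}_n^{\nu}$, does not overlap with those contributed by any other column, and that for every column the relevant quadratic form touches only the principal submatrix $\vec{G}_n^{\nu}$. Once this is observed, the remaining arguments (Schur complement and water-filling) are standard; positive definiteness of $\vec{G}$ guarantees $\hat{g}_n^{\nu}>0$ so that the final Lagrange multiplier $\lambda$ is well defined.
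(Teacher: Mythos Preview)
Your proposal is correct and follows essentially the same approach as the paper: both expand $\mathrm{Tr}(\vec{F}^{\dag}\vec{G}\vec{F})$ column-wise into the block quadratic forms $\vec{v}_n^{\dag}\vec{G}_n^{\nu}\vec{v}_n$, eliminate $\vec{f}_n^{\nu}$ by setting the gradient to zero to obtain (\ref{optfnk}) and the Schur complement $\hat{g}_n^{\nu}$, and then solve the resulting scalar problem by water-filling. The only cosmetic difference is that the paper writes a single joint Lagrangian and differentiates with respect to all variables at once, whereas you phrase the same computation as a two-stage partial minimization (off-diagonals first, then diagonals); the derivatives and conclusions are identical.
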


\begin{proof}
Consider the Lagrangian function
\bea  \label{costfun} \mathcal{L}= \sum_{n=1}^{N}\log\left(1+\frac{|f_{n,n}|^2}{N_0}\right)-\lambda\left(\mathrm{Tr}\left(\vec{F}\rmh\vec{G}\vec{F}\right)- P_{\mathrm{T}}\right),  \eea
where $\lambda$ is the Lagrange multiplier. The necessary conditions \cite{B82} for the optimal solution are
\bea  \label{KKT}   \left.\begin{aligned}
\frac{\partial\mathcal{L} }{\partial f_{n,k}} =0,\; 1\leq n, k&\leq N\;\, \\
 \mathrm{Tr}\left(\vec{F}\rmh\vec{G}\vec{F}\right)- P_{\mathrm{T}}&= 0\;\, \\
  \lambda& \geq 0  \;\, \end{aligned}\right\} .\eea
Note that, with the definitions in (\ref{fn})-(\ref{gn}), the trace term in (\ref{costfun}) can be rewritten as
{\setlength\arraycolsep{2pt} \bea  \label{trace} \mathrm{Tr}\left(\vec{F}\rmh\vec{G}\vec{F}\right)\!=\!\sum_{n=1}^{N}\left[ f_{n,n}\; \left(\vec{f}_n^{\nu}\right)\rmh\right]\! \left[\begin{array}{cc}  g_{n,n}& \left(\vec{g}_n^{\nu}\right)\rmh \\ \vec{g}_n^{\nu}&\vec{G}_{n+1}^{\nu-1} \end{array} \right] \!\left[ \begin{array}{c}f_{n,n}\;\\ \vec{f}_n^{\nu}\end{array}\right]\!. \;     \eea}
\hspace{-1.4mm}Taking the first-order derivatives of $\mathcal{L}$ with respect to $f_{n,n}$ and $\vec{f}_n^{\nu}$, and using (\ref{trace}) results in
{\setlength\arraycolsep{2pt} \bea  \label{der1} \frac{\partial\mathcal{L} }{\partial f_{n,n}} &=&\frac{N_0 f_{n,n}}{
N_0+|f_{n,n}|^2}-\lambda\left(f_{n,n}g_{n,n}+\left(\vec{f}_n^{\nu}\right)\rmh\vec{g}_n^{\nu} \right)\!,\;\\
\label{der2}  \nabla_{\vec{f}_n^{\nu}}\mathcal{L} &=&-\lambda\left( f_{n,n}\left(\vec{g}_n^{\nu}\right)\rmh+\left(\vec{f}_n^{\nu}\right)\rmh\vec{G}_{n+1}^{\nu-1}\right)\rmt \!.  \qquad\eea} 
\hspace{-1.4mm}Then, by setting $\nabla_{\vec{f}_n^{\nu}}\mathcal{L} $ in (\ref{der2}) to zero, the vector $\vec{f}_n^{\nu}$ can be solved for, and the result is given in (\ref{optfnk}).
Inserting (\ref{optfnk}) back into (\ref{der1}) and setting $\partial\mathcal{L} /\partial f_{n,n}$ to zero, we obtain
\bea  \label{wf22} \frac{N_0}{
N_0+|f_{n,n}|^2}=\lambda\left(g_{n,n}- \left(\vec{g}_n^{\nu} \right)\rmh\left(\vec{G}_{n+1}^{\nu-1}\right)^{-1}\vec{g}_{n}^{\nu}\right).\eea
From (\ref{wf22}) it holds that $\lambda\!>\!0$ as $N_0\!>\!0$, since $\hat{g}_{n}^{\nu}\!>\!0$ which will be shown later in Property 1. Using (\ref{hatg}), the optimal $f_{n,n}$ reads
\bea   \label{wf23} \left|f_{n,n}\right|^2=N_0\left[\frac{1}{\lambda\hat{g}_n^{\nu}}-1\right]^{+}. \eea
As we constrain $f_{n,n}$ to be positive, the solution of $f_{n,n}$ is in (\ref{wf2}), which completes the proof.
\end{proof}
With the necessary conditions of $\vec{f}_n^{\nu}$ and $f_{n,n}$ stated in Theorem 1, the constraint in (\ref{prbm3}) can be written as
{\setlength\arraycolsep{2pt} {\bea  \label{trace3} \frac{1}{N_0}\mathrm{Tr}\left(\vec{F}\rmh\vec{G}\vec{F}\right)&=&\frac{1}{N_0}\sum_{n=1}^{N}\hat{g}_n^{\nu}|f_{n,n}|^2  \notag \\&=& \sum_{n=1}^{N}\left[\frac{1}{\lambda}-\hat{g}_n^{\nu}\right]^{+}= \frac{P_{\mathrm{T}}}{N_0}. \eea}
\hspace{-1.4mm}and the sum-rate equals
 \bea \label{dual1}  R^{\mathrm{sum}}=\sum_{n=1}^{N}R_n^{\mathrm{user}},\eea
 where 
\bea R_n^{\mathrm{user}}\!=\!\big[-\log\left( \lambda \hat{g}_n^{\nu}\right)\big]^{+}. \eea
Therefore, to find the optimal solution for (\ref{prbm3}) is equivalent to find an optimal water-level $1/\lambda$ such that (\ref{dual1}) is maximized and (\ref{trace3}) is satisfied, which can be efficiently solved using water-filling scheme\cite{Y06}. Comparing (\ref{wf23}) with (\ref{wf1}), with the GZF-DP precoder a similar water-filling scheme still applies, however, the water-level has changed as $g_{n,n}$ is replaced now by $\hat{g}_n^{\nu}$, due to the preserved inter-user interference.} We state a property below that shows that $\hat{g}_n^{\nu}$ is positive and non-increasing in $\nu$ for all $1\!\leq\!n\!\leq\!N$.

\begin{property}
Under the condition that $\vec{H}$ has full row rank, for $1\!\leq\!n\!\leq\!N$, it holds that 
\bea \label{prop1} 0\!<\! \hat{g}_n^{N-1}\!\leq\!\hat{g}_n^{N-2}\!\leq\!\cdots\!\leq\!\hat{g}_n^1\!\leq\! g_{n,n}.  \eea
\end{property}

\begin{proof}
First we show that for $1\!\leq\!\nu\!\leq\!N\!-\!1$, $0\!<\! \hat{g}_n^{\nu}\!\leq\! g_{n,n}$ holds. Since $\vec{H}$ has full row rank, $\vec{G}\!\succ\!\vec{0}$. Consequently, $\vec{G}_{n+1}^{\nu-1}$ and $\vec{G}_{n}^{\nu}$ are also positive-definite as principle sub-matrices of $\vec{G}$. Hence, $\left(\vec{g}_n^{\nu} \right)\rmh\left(\vec{G}_{n+1}^{\nu-1}\right)^{-1}\vec{g}_{n}^{\nu}\!\geq\!0$, and $\hat{g}_n^{\nu}\!\leq\!g_{n,n}$ follows from (\ref{hatg}). On the other hand, from the definition, $\vec{G}_{n}^{\nu}$ equals
\vspace{-4mm}
\bea \label{Gn1} \vec{G}_{n}^{\nu}=\left[\!\begin{array}{cc} g_{n,n}&\left(\vec{g}_n^{\nu} \right)\rmh\\ \vec{g}_n^{\nu} & \vec{G}_{n+1}^{\nu-1} \end{array}\!\right]\!\!. \eea
Hence, $\hat{g}_n^{\nu}$ is the Schur-complement\cite{Z05} of $g_{n,n}$, and by utilizing the matrix-inversion lemma \cite{GL13}, the inverse $\left(\vec{G}_{n}^{\nu}\right)^{-1}\!\succ\!\vec{0}$ is in (\ref{inv1}), which shows that $\hat{g}_n^{\nu}\!>\!0$.

Next we show that, $\hat{g}_n^{\nu}\!\leq\! \hat{g}_n^{\nu-1}$ holds for $1\!\leq\!n\!\leq\!N$. Firstly, for $n\!>\!N\!-\!\nu$, by definitions (\ref{gn}) and (\ref{Gn}), the equalities $\vec{g}_n^{\nu}\!=\!\vec{g}_n^{\nu-1}\!$ and $\vec{G}_{n+1}^{\nu-1}\!=\!\vec{G}_{n+1}^{\nu-2}$ hold. Hence, from (\ref{hatg}), $\hat{g}_n^{\nu}\!=\! \hat{g}_n^{\nu-1}$ holds. Secondly, for $1\!\leq\!n\!\leq\!N\!-\!\nu$, $\vec{G}_{n}$ in (\ref{Gn1}) can also be rewritten as
\bea  \vec{G}_{n}^{\nu}=\left[\!\begin{array}{cc} \vec{G}_{n}^{\nu-1}&\left(\tilde{\vec{g}}_n^{\nu}\right)\rmh \\ \tilde{\vec{g}}_{n}^{\nu}& g_{n+\nu, n+\nu} \end{array}\!\right]\!\!,  \eea
where $\tilde{\vec{g}}_n^{\nu}\!=\!\left[g_{n+\nu,n},\, g_{n+\nu,n+1},\, \cdots, \,g_{n+\nu,n+\nu-1}\right]$. By utilizing the matrix-inversion lemma again, the inverse $\left(\vec{G}_{n}^{\nu}\right)^{-1}$ can also be written in (\ref{inv2}). From (\ref{inv1}) we know that, $\left(\hat{g}_n^{\nu-1}\right)^{-1}$ is the first diagonal element of $\left(\vec{G}_{n}^{\nu-1}\right)^{-1}$, while $\left(\hat{g}_n^{\nu}\right)^{-1}$ is the first diagonal element of $\left(\vec{G}_{n}^{\nu}\right)^{-1}$ and hence, the first diagonal element of $\left(\vec{G}_{n}^{\nu-1}\!-\!\frac{\left(\tilde{\vec{g}}_n^{\nu}\right)\rmh\tilde{\vec{g}}_n^{\nu}}{g_{n+\nu, n+\nu}}\right)^{-1}$ from (\ref{inv2}). Using the Woodbury matrix identity\cite{GL13}, $ \left(\vec{G}_{n}^{\nu-1}\!-\!\frac{\left(\tilde{\vec{g}}_n^{\nu}\right)\rmh\tilde{\vec{g}}_n^{\nu}}{g_{n+\nu, n+\nu}}\right)^{-1}\!\succeq\!\left(\vec{G}_{n}^{\nu-1}\right)^{-1}$ holds. Therefore, $\left(\hat{g}_n^{\nu}\right)^{-1}\!\geq\!\left(\hat{g}_n^{\nu-1}\right)^{-1}$ holds, and $\hat{g}_n^{\nu}\!\leq\! \hat{g}_n^{\nu-1}$ follows, which completes the proof.
\end{proof}

\begin{figure*}[b]
\vspace{-8mm}
\hrulefill
\vspace{2mm}
{\setlength\arraycolsep{4pt} 
 \bea \label{inv1} \left(\vec{G}_{n}^{\nu}\right)^{-1} \!=\!\left[\!\begin{array}{cc} \left(\hat{g}_n^{\nu}\right)^{-1}&-\left(\hat{g}_n^{\nu}\right)^{-1}\left(\vec{g}_n^{\nu} \right)\rmh\left(\vec{G}_{n+1}^{\nu-1}\right)^{-1}\\ -\left(\hat{g}_n^{\nu}\right)^{-1}\left(\vec{G}_{n+1}^{\nu-1}\right)^{\!-1}\!\vec{g}_n^{\nu}& \left(\vec{G}_{n+1}^{\nu-1}\right)^{\!-1}\!+\left(\hat{g}_n^{\nu}\right)^{\!-1}\left(\vec{G}_{n+1}^{\nu-1}\right)^{-1}\vec{g}_n^{\nu}\left(\vec{g}_n^{\nu} \right)\rmh\left(\vec{G}_{n+1}^{\nu-1}\right)^{\!-1} \end{array}\!\!\right]\!\!. \eea}
 \vspace{2mm}
\hspace{-1.4mm}\hrulefill
\vspace{1mm}
{\setlength\arraycolsep{4pt} 
 \bea \label{inv2} \left(\vec{G}_{n}^{\nu}\right)^{-1} \!\!=\!\!\left[\!\begin{array}{cc} \left(\vec{G}_{n}^{\nu-1}\!-\!\frac{\left(\tilde{\vec{g}}_n^{\nu}\right)\rmh\tilde{\vec{g}}_n^{\nu}}{g_{n+\nu, n+\nu}}\right)^{\!-1}&-\left(\vec{G}_{n}^{\nu-1}\!-\!\frac{\left(\tilde{\vec{g}}_n^{\nu}\right)\rmh\tilde{\vec{g}}_n^{\nu}}{g_{n+\nu, n+\nu}}\right)^{-1}\frac{\left(\tilde{\vec{g}}_n^{\nu}\right)\rmh}{g_{n+\nu, n+\nu}}\\ -\frac{-\tilde{\vec{g}}_n^{\nu}}{g_{n+\nu, n+\nu}}\left(\vec{G}_{n}^{\nu-1}\!-\!\frac{\left(\tilde{\vec{g}}_n^{\nu}\right)\rmh\tilde{\vec{g}}_n^{\nu}}{g_{n+\nu, n+\nu}}\right)^{\!-1}& \frac{1}{g_{n+\nu, n+\nu}}\!+\!\frac{\tilde{\vec{g}}_n^{\nu}}{g_{n+\nu, n+\nu}^2}\left(\vec{G}_{n}^{\nu-1}\!-\!\frac{\left(\tilde{\vec{g}}_n^{\nu}\right)\rmh\tilde{\vec{g}}_n^{\nu}}{g_{n+\nu, n+\nu}}\right)^{-1}\left(\tilde{\vec{g}}_n^{\nu}\right)\rmh
 
   \end{array}\!\!\right]\!\!.\quad\; \; \eea}
   \vspace{-9mm}
 \end{figure*}
 
As $\hat{g}_n^{\nu}\!\leq\! g_{n,n}$, from (\ref{trace3}) in general the water-level $1/\lambda$ is actually non-increasing when $\nu$ increases. Therefore, not all the user-rates are increased with a larger $\nu$. For instance, for the last user, as $\hat{g}_N^{\nu}\!=\! g_{N,N}$ for all $\nu$, the user-rate $R_{N}^{\mathrm{user}}$ is non-increasing as $\nu$ increases. In general, we have the following corollary.

\begin{corollary}
If $\nu$ is increased from $\nu_1$ to $\nu_1\!+\!1$ for the GZF-DP precoder, as for $n\!\geq\!N\!-\!\nu_1$, $\hat{g}_n^{\nu}\!=\!\hat{g}_n^{\nu-1}\!$ holds, and as a result of the non-increasing water-level, the user-rates of the last $\nu_1\!+\!1$ users are also non-increasing. 
\end{corollary}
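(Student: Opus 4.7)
The plan is to combine Property 1's case analysis with the structure of the water-filling constraint \eqref{trace3}. The corollary has two ingredients: (i) the asserted equality $\hat{g}_n^{\nu_1+1}=\hat{g}_n^{\nu_1}$ for the last $\nu_1+1$ users, and (ii) the monotonicity of the water-level in $\nu$. Granting both, the rate conclusion will follow almost immediately from the closed-form $R_n^{\mathrm{user}}=[-\log(\lambda\hat{g}_n^{\nu})]^+$.

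First, I would pin down (i) by invoking the first case inside the proof of Property 1: whenever $n>N-\nu$, the definitions \eqref{fn}, \eqref{Gn}, \eqref{gn} force $\vec{g}_n^{\nu}=\vec{g}_n^{\nu-1}$ and $\vec{G}_{n+1}^{\nu-1}=\vec{G}_{n+1}^{\nu-2}$, so by \eqref{hatg} we get $\hat{g}_n^{\nu}=\hat{g}_n^{\nu-1}$. Applied with $\nu=\nu_1+1$, this gives $\hat{g}_n^{\nu_1+1}=\hat{g}_n^{\nu_1}$ for every $n\geq N-\nu_1$, which is exactly the range of users the corollary concerns.

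Second, I would establish that the water-level $1/\lambda$ is non-increasing in $\nu$. Write $\lambda(\nu)$ for the Lagrange multiplier that satisfies
\begin{equation}
\sum_{n=1}^{N}\left[\frac{1}{\lambda(\nu)}-\hat{g}_n^{\nu}\right]^{+}=\frac{P_{\mathrm{T}}}{N_0}.
\end{equation}
By Property 1, $\hat{g}_n^{\nu_1+1}\leq \hat{g}_n^{\nu_1}$ for every $n$. If we tentatively kept the old water-level $1/\lambda(\nu_1)$, then each summand $[1/\lambda(\nu_1)-\hat{g}_n^{\nu_1+1}]^+$ would be weakly larger than $[1/\lambda(\nu_1)-\hat{g}_n^{\nu_1}]^+$, so the total would weakly exceed $P_{\mathrm{T}}/N_0$. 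Since the left-hand side is continuous and non-increasing in $\lambda$, restoring equality requires $\lambda(\nu_1+1)\geq \lambda(\nu_1)$, i.e.\ the water-level $1/\lambda$ is non-increasing.

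Finally, I combine the two ingredients. For any $n\geq N-\nu_1$ we have $\hat{g}_n^{\nu_1+1}=\hat{g}_n^{\nu_1}$ from step one, and $\lambda(\nu_1+1)\geq \lambda(\nu_1)$ from step two. Hence
\begin{equation}
\lambda(\nu_1+1)\,\hat{g}_n^{\nu_1+1}\;\geq\;\lambda(\nu_1)\,\hat{g}_n^{\nu_1},
\end{equation}
and applying the non-increasing map $x\mapsto[-\log x]^+$ yields $R_n^{\mathrm{user}}(\nu_1+1)\leq R_n^{\mathrm{user}}(\nu_1)$ for the last $\nu_1+1$ users. The only delicate step is the monotonicity of the water-level, which relies on strict positivity of $\hat{g}_n^{\nu}$ (also from Property 1) so that the piecewise-linear map $\lambda\mapsto\sum_n[1/\lambda-\hat{g}_n^{\nu}]^+$ is genuinely strictly decreasing where it is positive; everything else is bookkeeping via Property 1 and the formula in Theorem 1.
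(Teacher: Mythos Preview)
Your proposal is correct and follows essentially the same route as the paper: the paper does not give a separate formal proof of Corollary~1 but treats it as an immediate consequence of the case $n>N-\nu$ in the proof of Property~1 together with the remark, just before the corollary, that the water-level $1/\lambda$ is non-increasing in $\nu$. Your write-up simply makes that informal water-level claim rigorous via the monotonicity of $\lambda\mapsto\sum_n[1/\lambda-\hat g_n^{\nu}]^+$ combined with $\hat g_n^{\nu_1+1}\le \hat g_n^{\nu_1}$ from Property~1, which is a welcome but not substantively different addition.
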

However, the sum-rate never decrease with a larger $\nu$, which is stated in the below property.

\begin{property}
If $\nu_2\!>\!\nu_1$, the sum-rate $R^{\mathrm{sum}}$ obtained with the GZF-DP precoder with $\nu\!=\!\nu_2$ is no less than that obtained with $\nu\!=\!\nu_1$. However, under the case that the channel $\vec{H}$ itself is band-shaped with only the elements along the main diagonal and the first $\nu_1$ lower-diagonals can take non-zero values, increasing $\nu$ to be larger than $\nu_1$ will not further increase $R^{\mathrm{sum}}$.
\end{property}
\begin{proof}
The first statement holds from the fact that the effective channel $\vec{F}$ with $\nu\!=\!\nu_1$ is a subset of $\vec{F}$ with $\nu\!=\!\nu_2$. Next we prove the second statement by showing that $\hat{g}_n^{\nu}\!=\!\hat{g}_n^{\nu_1}\!$ for any $n$ and $\nu\!>\!\nu_1$, under the condition that $\vec{H}$ is band-shaped with only the elements along the main diagonal and the first $\nu_1$ lower-diagonals can take non-zero values. Therefore, in such a case, the sum-rate $R^{\mathrm{sum}}$ obtained with $\nu\!>\!\nu_1$ is equal to $R^{\mathrm{sum}}$ with $\nu\!=\!\nu_1$.

We first show that, for $n\!=\!1$, $\hat{g}_1^{\nu}\!=\!\hat{g}_1^{\nu_1}\!$ holds for $\nu\!>\!\nu_1$. We decompose $\vec{G}$ and $\vec{H}\vec{H}\rmh$ into block forms as
 \bea \vec{G}=\left[\!\begin{array}{cc} \vec{G}_{1}^{\nu}&\vec{G}_2\rmh\\ \vec{G}_2& \vec{G}_3 \end{array}\!\right]\!\!,  \;\;
\vec{H}\vec{H}\rmh=\left[\!\begin{array}{cc} \vec{B}_{1}&\vec{B}_2\rmh\\ \vec{B}_2& \vec{B}_3 \end{array}\!\right]\!\!, \!\!\!\!\!\! \eea
where sub-matrix $\vec{G}_{1}^{\nu}$ follows the definition in (\ref{Gn}) and sub-matrices $\vec{G}_2$, $\vec{G}_3$ are deduced from $\vec{G}_{1}^{\nu}$. Similarly, sub-matrix $\vec{B}_{1}$ has the same size as $\vec{G}_{1}^{\nu}$, and sub-matrices $\vec{B}_2$, $\vec{B}_3$ are deduced from $\vec{B}_{1}$. As $\vec{G}\!=\!\left(\vec{H}\vec{H}\rmh\right)^{-1}$, following the matrix inversion lemma we have
\bea \label{prop21} \left(\vec{G}_{1}^{\nu}\right)^{-1}=\vec{B}_1-\vec{B}_2\rmh\vec{B}_3^{-1}\vec{B}_2. \eea
As $\vec{H}$ is band-shaped, when $\nu\!\geq\!\nu_1$, the first row vector in $\vec{B}_2\rmh$ comprises all zero elements. Consequently, from (\ref{prop21}) the first diagonal element of $\left(\vec{G}_{1}^{\nu}\right)^{-1}$, which is $\left(\hat{g}_{1}^{\nu}\right)^{-1}$, is equal to the first diagonal element of $\vec{B}_1$. Hence, we have
\bea \label{prop22} \hat{g}_{1}^{\nu}=|h_1(1)|^{-2},\;\; \nu\!\geq\!\nu_1, \eea
where $h_1(1)$ is the first tap of the channel vector corresponding to the first user, 

For $n\!>\!1$, we can permute the principle sub-matrix $\vec{G}_{n}^{\nu}$ to the upper-left corner with a permutation matrix $\vec{Q}$ such that,
\bea \vec{Q}\vec{G}\vec{Q}\rmh=\left[\!\begin{array}{cc} \vec{G}_{n}^{\nu}&\tilde{\vec{G}}_2\rmh\\ \tilde{\vec{G}}_2& \tilde{\vec{G}}_3 \end{array}\!\right]\!\!,  \eea
where $\tilde{\vec{G}}_2$, $\tilde{\vec{G}}_3$ are deduced from $\vec{G}_{n}^{\nu}$. We also permute $\vec{H}\vec{H}\rmh$ accordingly such that
\bea \vec{Q}\vec{H}\vec{H}\rmh\vec{Q}\rmh=\left[\!\begin{array}{cc} \tilde{\vec{B}}_1&\tilde{\vec{B}}_2\rmh\\ \tilde{\vec{B}}_2& \tilde{\vec{B}}_3 \end{array}\!\right]\!\!,  \eea
where sub-matrices $\tilde{\vec{B}}_{1}, \tilde{\vec{B}}_2, \tilde{\vec{B}}_3$ are defined similarly as before. As $\vec{Q}\vec{G}\vec{Q}\rmh\!=\!\left(\vec{Q}\vec{H}\vec{H}\rmh\vec{Q}\rmh\right)^{-1}$ holds, following (\ref{prop21}) and (\ref{prop22}) we have
\bea \label{prop23} \hat{g}_{n}^{\nu}=|h_n(n)|^{-2},\;\; \nu\!\geq\!\nu_1, \eea
where $h_n(n)$ is the $n$th tap of the channel vector corresponding to the $n$th user, which is transfered to be the first user after permutation. Therefore, with $\hat{g}_n^{\nu}$ given in (\ref{prop23}), it holds that, $\hat{g}_n^{\nu}\!=\!\hat{g}_n^{\nu_1}\!$ for any $n$ and $\nu\!>\!\nu_1$, which completes the proof.
\end{proof}

Property 2 reveals that if $\vec{H}$ is banded, further increasing the band-size of $\vec{F}$ to be larger than the band-size of $\vec{H}$ will not increase the sum-rate. Moreover, for a band-shaped $\vec{H}$, $\hat{g}_{n}^{\nu}$ can be easily calculated through (\ref{prop23}) for $\nu\!\geq\!\nu_1$. Next, we show that the GZF-DP precoder design actually provides a unified framework of the previous ZF based precoder designs.
\begin{corollary}
With $\nu\!=\!0$, the GZF-DP precoder becomes the linear ZF precoder without DPC; while with $\nu\!=\!N\!-\!1$, the GZF-DP precoder is identical to the ZF-DP precoder. In addition, the UG-DP precoder is inferior to the GZF-DP precoder with $\nu\!=\!N_g\!-\!1$.
\end{corollary}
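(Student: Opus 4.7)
The plan is to handle the three claims in Corollary 2 separately, each following almost immediately from the definitions together with Theorem 1 and Property 1.

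\smallskip
\noindent\textbf{The $\nu=0$ case.} First I would observe that in the definition (\ref{Fmat}), setting $\nu=0$ forces every below-diagonal entry to zero, so $\vec{F}$ reduces to a diagonal matrix, and the DPC stage in (\ref{md3}) becomes trivial since the summation $\sum_{k=n\ominus 0}^{n-1}f_{n,k}x_k$ is empty. The vectors $\vec{f}_n^{0}$ and $\vec{g}_n^{0}$ in (\ref{fn}), (\ref{gn}) are also empty, so the Schur-complement term in (\ref{hatg}) drops out and one obtains $\hat g_n^0=g_{n,n}$. Plugging this into the water-filling solution (\ref{wf2}) reproduces (\ref{wf1}), and the constraint (\ref{trace3}) collapses to the linear-ZF constraint. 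Hence the GZF-DP solution coincides bit-for-bit with the linear ZF solution of Sec.~II-B.

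\smallskip
\noindent\textbf{The $\nu=N-1$ case.} Here the band constraint in (\ref{Fmat}) becomes vacuous, so $\vec{F}$ is an arbitrary lower-triangular $N\!\times\! N$ matrix, which is precisely the effective-channel structure of the ZF-DP precoder $\vec{F}=\vec{R}\vec{B}$. To match the two closed-form water-fillings I would identify $\hat g_n^{N-1}$ with the Cholesky factor of $\vec{H}\vec{H}\rmh$. Concretely, with $\vec{H}\vec{H}\rmh=\vec{R}\vec{R}\rmh$, partitioning $\vec{R}$ into the block with rows/columns $n,\ldots,N$ and using the block identity $D-B\rmh A^{-1}B=R_2R_2\rmh$ shows that $(\vec{G}_n^{N-1})^{-1}$ equals $R_2R_2\rmh$; since $\vec{R}$ is lower-triangular the first row of $R_2$ has only $r_{n,n}$ as its nonzero entry, so the first diagonal entry of $(\vec{G}_n^{N-1})^{-1}$ is $|r_{n,n}|^2$, i.e.\ $\hat g_n^{N-1}=1/|r_{n,n}|^2$. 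Substituting this in (\ref{wf2}) yields $|f_{n,n}|^2=[\,|r_{n,n}|^2/\lambda-N_0\,]^+$, which matches the ZF-DP water-filling on $b_n$ in (\ref{prbm2}) after the trivial rescaling of the Lagrange multiplier, and the sum-power constraints coincide as well. Both precoders therefore deliver the same sum-rate.

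\smallskip
\noindent\textbf{The UG-DP case.} After reordering users so that each group occupies $N_g$ consecutive indices, the UG-DP effective channel $\vec{F}=\vec{R}\vec{B}$ with block-diagonal, lower-triangular $\vec{R}$ (blocks of size $N_g$) satisfies $f_{n,k}=0$ whenever $k<n-(N_g-1)$, i.e.\ it already lies in the feasible set of (\ref{Fmat}) with $\nu=N_g-1$, but with additional forced zeros at all positions that cross a group boundary. Thus the UG-DP feasible set is a strict subset of the GZF-DP feasible set at $\nu=N_g-1$, and both optimizations share the same objective (\ref{prbm3}) and the same power constraint (\ref{con1}); consequently the optimum of the latter upper-bounds the optimum of the former, which is precisely what \emph{inferior} means here.

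\smallskip
\noindent The only nontrivial obstacle is the middle step, namely the Schur-complement/Cholesky identification $\hat g_n^{N-1}=1/|r_{n,n}|^2$; the other two parts are structural observations about the sparsity pattern (\ref{Fmat}). Once that identity is in place, everything else is a matter of reading the two water-filling formulas side by side.
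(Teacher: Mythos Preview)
Your argument is correct and in fact more thorough than the paper's own proof. The paper disposes of all three claims in three sentences: for $\nu=0$ it notes $\hat g_n^{0}=g_{n,n}$; for $\nu=N-1$ it simply asserts that (\ref{prbm3}) ``can be formulated as the same problem in (\ref{prbm2})'' without further justification; and for UG-DP it remarks that the UG-DP precoder ``can be reviewed as a special case.'' Your treatment of the first and third claims matches this reasoning exactly, with the feasible-set inclusion for UG-DP spelled out more carefully.

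For the $\nu=N-1$ case you take a more explicit route than the paper: rather than merely asserting that the two optimization problems coincide, you compute $\hat g_n^{N-1}=1/|r_{n,n}|^{2}$ via a block Schur-complement / Cholesky argument and then match the water-filling solution (\ref{wf2}) against the ZF-DP water-filling of (\ref{prbm2}) term by term. This is sound and arguably cleaner than the implicit change of variables $\vec{C}=\vec{R}^{-1}\vec{F}$ that underlies the paper's one-line claim. One minor imprecision worth tightening: your sentence ``$\vec{F}$ is an arbitrary lower-triangular $N\times N$ matrix, which is precisely the effective-channel structure of the ZF-DP precoder $\vec{F}=\vec{R}\vec{B}$'' overstates things, since with diagonal $\vec{B}$ the ZF-DP effective channels form only a proper subset of the lower-triangular matrices (each column a scalar multiple of the corresponding column of $\vec{R}$). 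This does not damage your proof, because the subsequent Cholesky identity, water-filling match, and power-constraint match establish equality of the \emph{optimal} sum-rates on their own, independent of whether the two feasible sets coincide.
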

\begin{proof}
When $\nu\!=\!0$, $\hat{g}_n^{\nu}\!=\!g_{n,n}$ for all $n$, and the GZF-DP precoder is thusly identical to the linear ZF precoder. On the other hand, when $\nu\!=\!N\!-\!1$, the maximization (\ref{prbm3}) can be formulated as the same problem in (\ref{prbm2}), which shows the trade-off between the sum-rate and the complexity of successive DPC. Moreover, as the UG-DP can be reviewed as a special case of the GZF-DP with $\nu\!=\!N_g\!-\!1$, the UG-DP precoder is inferior to the GZF-DP precoder in general.
\end{proof}
Although the ZF-DP precoder provides the highest sum-rate, as shown in Corollary 1, it sacrifices user-rates of some of the last users. As a generalization of the ZF-DP precoder, the GZF-DP precoder, however, can provide a trade-off between the sum-rate increment and the user-rate decrement through the parameter $\nu$. Below we illustrate with an example to show different designs of the linear ZF precoder, the UG-DP precoder, and the proposed GZF-DP precoder.
 
 \begin{figure*}[!b]
{\small
\vspace{-8mm}
\hrulefill
\vspace{2mm}
{\setlength\arraycolsep{1.5pt} 
\bea \vec{F}_{\mathrm{ZF}}\!=\!\left[\!\begin{array}{cccc}  4.376 &~&~&~\\ ~&5.238&~&~ \\ ~&~&4.436&~\\ ~&~&~& 4.407\end{array} \!\right]\!\!, \;
\vec{F}_{\mathrm{UG-DP}}\!=\!\left[\!\begin{array}{cccc}    4.899&0&~&~\\   -1.140 \!+\! 2.340i & -5.217&~&~ \\ ~&~&4.490 &0\\ ~&~&0.489 \!+\! 0.607i&4.389 \end{array} \!\right]\!\!, \notag\eea}
\vspace{2mm}
\hspace{-1.4mm}\hrulefill
\vspace{2mm}
{\setlength\arraycolsep{2pt} 
\bea
 \vec{F}_{\mathrm{GZF-DP},\;\nu=1}\!=\!\left[\!\begin{array}{cccc}    4.910&0&~&~\\     -1.143 \!+\! 2.345i & 5.784&~&~ \\ ~& 2.034 \!+\! 0.416i &4.501 &0\\ ~&~&0.490 \!+\! 0.609i&4.400 \end{array} \!\right]\!\!. \notag\eea}
   \vspace{-12mm}
   }
 \end{figure*}
 
\begin{example}
Assume $N_0\!=\!1$, $P_{\mathrm{T}}\!=\!10$ dB, and consider an MISO channel with 4 transmit antennas and 4 single-antenna users as ($i=\sqrt{-1}$) 
{\setlength\arraycolsep{4pt} 
\bea \vec{H}=\left[\!\begin{array}{cccc} 1+4i&4+3i&2+3i&3+3i\\ 4+1i&1+4i&1+1i&2+4i \\ 2+3i&1+4i&3+3i&4+3i\\ 4+4i&2+3i&1+4i&2+2i\end{array} \!\right]\!. \notag\eea} 
\hspace{-2mm}The sum-rates (bits/channel use) of the ZF precoder, the UG-DP precoder with $N_{\mathrm{g}}\!=\!2$, and the GZF-DP precoder with $\nu\!=\!1$ are equal to
\bea  R_{\mathrm{ZF}}^{\mathrm{sum}}\!=\!17.885,\;\,R_{\mathrm{UG-DP}}^{\mathrm{sum}}\!=\!18.206, \;\,
\mathrm{and}\;\, R_{\mathrm{GZF-DP,\;\nu=1}}^{\mathrm{sum}}\!=\!18.514,\notag \eea
respectively. The optimal effective channels $\vec{F}$ are listed at the bottom of this page.
\end{example}
With Example 1, the user-rates corresponding to different precoders are equal to
{\setlength\arraycolsep{2pt} \bea  R_{\mathrm{ZF}}^{\mathrm{user}}&=&[ 4.333, \,4.830, \,4.370, \,4.352], \notag \\
 R_{\mathrm{GZF-DP},\; \nu=1}^{\mathrm{user}}&=&[ 4.650, \,5.106, \,4.410, \,4.348], \notag \\
 R_{\mathrm{GZF-DP}, \;\nu=2}^{\mathrm{user}}&=&[ 5.394, \,6.047, \,4.387, \,4.324].\notag \eea}
\hspace{-1.4mm}As it can been seen, although the sum-rate is increased from $\nu\!=\!0$ to 1, the user-rate of the last user is decreased. Further, from $\nu\!=\!1$ to 2, the user-rates of the last two users are also decreased, which are aligned with Corollary 1. Especially for the last user, the user-rate is continuously decreasing when $\nu$ increases from 0 to 2. Therefore, instead of maximizing the sum-rate, it is also meaningful to consider maximizations of user-rate, which is usually used as a measurement for the fairness of the QoS. 

Next we discuss the minimum user-rate maximization with the proposed GZF-DP precoder.

\subsection{Minimum User-rate Maximization}
For minimum user-rate maximization, the design of the GZF-DP precoder is formulated as
{\setlength\arraycolsep{2pt} \bea \label{prbm41} &&\underset{\vec{F},\, R^{\mathrm{user}}}{\mathrm{maximize\;\;}}  R^{\mathrm{user}} \notag \\
&&\mathrm{\,subject\; to\;\,}   R^{\mathrm{user}}\leq\log\left(1+\frac{|f_{n,n}|^2}{N_0}\right), 1\leq n\leq N \notag   \\
&&\qquad \qquad\quad\!\mathrm{Tr}\left(\vec{F}\rmh\vec{G}\vec{F}\right)\leq P_{\mathrm{T}},\eea}
\hspace{-1.4mm}where the matrices $\vec{F}$ and $\vec{G}$ are the same as defined for the sum-rate maximization and we constrain $f_{n,n}\!\geq\!0$. Following similar arguments as for the sum-rate maximization, it also holds that the equality in the power constraint always holds for an optimal solution of (\ref{prbm41}). Furthermore, we have the below lemma.

\begin{lemma}
For an optimal solution $\vec{F}$ of (\ref{prbm41}), it holds that $R^{\mathrm{user}}\!=\!\log\left(1+\frac{|f_{n,n}|^2}{N_0}\right)$ for all $n$.
 \end{lemma}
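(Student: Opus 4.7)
My plan is a standard contradiction argument: if the minimum user rate is not attained by every user, I can release power from a strictly over-serving user and redistribute it to the users sitting at the minimum, pushing the minimum strictly up.

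First, I would argue that, without loss of generality, the off-diagonal entries $\vec{f}_n^{\nu}$ may be taken as in (\ref{optfnk}). The objective $R^{\mathrm{user}}$ and the $N$ rate constraints depend only on the diagonal entries $f_{n,n}$, while from (\ref{trace}) the power $\mathrm{Tr}(\vec{F}\rmh\vec{G}\vec{F})$ decouples into $N$ independent quadratic forms, one per column, each minimized by the choice in (\ref{optfnk}) (this minimizer is well-defined since, as noted in the proof of Property~1, every principal sub-matrix $\vec{G}_{n+1}^{\nu-1}$ is positive-definite). Replacing any feasible $\vec{F}$ by its off-diagonal-optimized version preserves the rate constraints and can only slacken the power constraint, so I may restrict attention to such $\vec{F}$. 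The power constraint then reads $\sum_n \hat{g}_n^{\nu}|f_{n,n}|^2\leq P_{\mathrm{T}}$, and the same scaling trick used for (\ref{prbm3}) forces equality at any optimum.

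Next, I would suppose for contradiction that at an optimal $\vec{F}$ there is an index $k$ with $\log\!\left(1+|f_{k,k}|^2/N_0\right)>R^{\mathrm{user}}$. I then pick $\epsilon>0$ small enough that the perturbed diagonal $\tilde{f}_{k,k}=f_{k,k}-\epsilon$ still satisfies $\log\!\left(1+|\tilde{f}_{k,k}|^2/N_0\right)>R^{\mathrm{user}}$. By Property~1, $\hat{g}_k^{\nu}>0$, so this perturbation releases a strictly positive budget $\Delta P=\hat{g}_k^{\nu}\!\left(|f_{k,k}|^2-|\tilde{f}_{k,k}|^2\right)>0$.

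Finally, I would distribute $\Delta P$ among the active set $\mathcal{T}=\{n:\log(1+|f_{n,n}|^2/N_0)=R^{\mathrm{user}}\}$, which is nonempty by definition of the minimum, raising each $|f_{n,n}|^2$ for $n\in\mathcal{T}$. Since every $\hat{g}_n^{\nu}$ is positive, a strictly positive increment is admissible, so each user in $\mathcal{T}$ sees its rate move strictly above $R^{\mathrm{user}}$. Users outside $\mathcal{T}\cup\{k\}$ are untouched and already exceed $R^{\mathrm{user}}$, while user $k$ stays strictly above $R^{\mathrm{user}}$ by the choice of $\epsilon$. Hence the new minimum rate exceeds $R^{\mathrm{user}}$, contradicting optimality. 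The bookkeeping in the last two steps is routine; the only step that really uses the GZF-DP structure is the first, and its correctness is inherited from the column-wise decoupling in (\ref{trace}) together with the positive-definiteness established in the proof of Property~1.
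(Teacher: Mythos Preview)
Your argument is correct. Both your proof and the paper's are contradiction arguments based on reallocating power from an over-served user to the users sitting at the minimum, but the mechanics differ. The paper works directly with the full $\vec{F}$: it picks a user $n_1$ with maximal rate and a user $n_2$ with minimal rate, and scales the \emph{entire} columns $(f_{n_i,n_i},\vec{f}_{n_i}^{\nu})\mapsto\alpha_i(f_{n_i,n_i},\vec{f}_{n_i}^{\nu})$ with $\alpha_2>1>\alpha_1$ chosen so that the combined column power $\alpha_1^2P_1+\alpha_2^2P_2$ is preserved; this raises $R_{n_2}^{\mathrm{user}}$ and lowers $R_{n_1}^{\mathrm{user}}$. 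Your route instead first collapses the problem to the diagonals via the column-wise minimization of (\ref{trace}), yielding the reduced budget $\sum_n \hat g_n^{\nu}|f_{n,n}|^2$, and then perturbs only the $f_{n,n}$. What your approach buys is a clean treatment of ties: by redistributing the released $\Delta P$ over the whole active set $\mathcal{T}$, you strictly raise every tied user, whereas the paper's two-user swap does not move the minimum if several users share it (the paper's argument implicitly needs an iteration or an induction on $|\mathcal{T}|$). What the paper's approach buys is that it never touches the off-diagonal structure, so Lemma~1 stands independently of (\ref{optfnk}) and (\ref{hatg}); your first step front-loads that reduction, which is legitimate (it relies only on (\ref{trace}) and the positive-definiteness of principal sub-matrices from Property~1) but effectively anticipates part of Theorem~2.
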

\begin{proof}
For an optimal solution $\vec{F}$ of (\ref{prbm41}), we denote the maximal and minimal user-rates as $ R_{n_1}^{\mathrm{user}}$ and $R_{n_2}^{\mathrm{user}}$, respectively, which equal
\bea \label{Rni} R_{n_i}^{\mathrm{user}}=\log\left(1+\frac{|f_{n_i,n_i}|^2}{N_0}\right), \quad i=1, 2.  \eea
Then, the minimum user-rate is equal to $R_{n_2}^{\mathrm{user}}$. We further denote the transmit powers of user $n_1$ and $n_2$ as $P_1$ and $P_2$, respectively. According to (\ref{trace}), it holds that
\bea \label{pi} P_i= g_{n_i,n_i}|f_{n_i,n_i}|^2+2\mathcal{R}\left\{\left(\vec{f}_{n_i}^{\nu}\right)\rmh\vec{g}_{n_i}^{\nu}f_{n_i,n_i}\right\}+\left(\vec{f}_{n_i}^{\nu}\right)\rmh\vec{G}_{n_i+1}^{\nu-1}\vec{f}_{n_i}^{\nu}, \quad i=1, 2.  \eea
Now let\rq{}s assume $R_{n_1}\!>\! R_{n_2}$, that is, the maximal user-rate is strictly larger than the minimal user-rate. Then, we can scale $f_{n_i,n_i}$ and $\vec{f}_{n_i}^{\nu}$ to be $\tilde{f}_{n_i,n_i}\!=\!\alpha_i f_{n_i,n_i}$ and $\tilde{\vec{f}}_{n_i}^{\nu}\!=\!\alpha_i\vec{f}_{n_i}^{\nu}$, respectively, where $\alpha_2\!>\!1\!>\!\alpha_1$ and
\bea \alpha_1=\sqrt{1+\frac{(1-\alpha_2^2)P_2}{P_1}}. \notag \eea
Note that, according to (\ref{pi}), with such a scaling operation, the total transmit power of user $n_1$ and $n_2$ remains the same, that is, $\alpha_1^2P_1\!+\!\alpha_2^2P_2\!=\!P_1\!+\!P_2$. However, according to (\ref{Rni}), the user-rate with such a scaling increases $R_{n_2}^{\mathrm{user}}$ and decreased $R_{n_1}^{\mathrm{user}}$. Hence, the minimum user-rate can therefore be increased, which contradicts to the assumption that $\vec{F}$ is optimal. Therefore, for an optimal $\vec{F}$, $R_{n_1}^{\mathrm{user}}\!=\! R_{n_2}^{\mathrm{user}}$ holds, which shows that all user-rates are equal to each other for an optimal $\vec{F}$ of (\ref{prbm41}).
\end{proof}
 
With the above arguments, we can change (\ref{prbm41}) to the equivalent problem 
{\setlength\arraycolsep{2pt} \bea \label{prbm4} &&\underset{\vec{F},\, R^{\mathrm{user}}}{\mathrm{maximize\;\;}}  R^{\mathrm{user}} \notag \\
&&\mathrm{\,subject\; to\;\,}   R^{\mathrm{user}}=\log\left(1+\frac{|f_{n,n}|^2}{N_0}\right), 1\leq n\leq N \notag   \\
&&\qquad \qquad\quad\!\mathrm{Tr}\left(\vec{F}\rmh\vec{G}\vec{F}\right)= P_{\mathrm{T}}.\eea}
\hspace{-1.4mm}Then, the necessary conditions for an optimal solution $\vec{F}$ is stated in Theorem 2.
\begin{theorem}{
The optimal band-shaped and low-triangular matrix $\vec{F}$ in (\ref{Fmat}) for user-rate maximization (\ref{prbm4}) shall satisfy the the conditions that}, the optimal $\vec{f}_n^{\nu}$ is in (\ref{optfnk}) and $f_{n,n}$ equals
\bea \label{wf3} f_{n,n}=\sqrt{N_0\left[\frac{1}{\lambda_n\hat{g}_n^{\nu}}-1\right]^{+}}, \eea
where $\lambda_n\!>\!0$ are a set of constants such that the transmit power constraint is satisfied.\end{theorem}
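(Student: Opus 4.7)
The plan is to mimic the Lagrangian argument used for Theorem 1, with the single modification that each of the $N$ equality constraints binding $R^{\mathrm{user}}$ to $\log(1+|f_{n,n}|^2/N_0)$ acquires its own multiplier $\mu_n$, in addition to the power-budget multiplier $\lambda$. I would therefore start from
\begin{equation*}
\mathcal{L}=R^{\mathrm{user}}+\sum_{n=1}^{N}\mu_n\!\left(\log\!\left(1+\frac{|f_{n,n}|^2}{N_0}\right)-R^{\mathrm{user}}\right)-\lambda\!\left(\mathrm{Tr}(\vec{F}\rmh\vec{G}\vec{F})-P_{\mathrm{T}}\right)\!,
\end{equation*}
and treat $R^{\mathrm{user}}$, the diagonal entries $f_{n,n}$, and the off-diagonal blocks $\vec{f}_n^{\nu}$ as the primal variables.

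First I would write the three first-order conditions. Differentiation in the scalar $R^{\mathrm{user}}$ instantly gives $\sum_n\mu_n=1$. Differentiation in the vector $\vec{f}_n^{\nu}$ produces exactly the same equation as (\ref{der2}), since neither the objective nor any of the rate constraints depend on $\vec{f}_n^{\nu}$; the formula (\ref{optfnk}) thus persists unchanged. Differentiation in $f_{n,n}$, via the same block-trace expansion (\ref{trace}) that led to (\ref{der1}), yields
\begin{equation*}
\mu_n\,\frac{N_0 f_{n,n}}{N_0+|f_{n,n}|^2}=\lambda\!\left(f_{n,n}g_{n,n}+(\vec{f}_n^{\nu})\rmh\vec{g}_n^{\nu}\right)\!.
\end{equation*}
Substituting (\ref{optfnk}) and invoking the definition (\ref{hatg}) collapses the right-hand side to $\lambda f_{n,n}\hat{g}_n^{\nu}$; dividing out $f_{n,n}>0$ and introducing the per-user parameter $\lambda_n\triangleq\lambda/\mu_n$ rearranges the equation into $|f_{n,n}|^2=N_0\bigl(1/(\lambda_n\hat{g}_n^{\nu})-1\bigr)$, which becomes (\ref{wf3}) after the non-negative projection and the sign convention $f_{n,n}\!\geq\!0$.

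The remaining point is to justify $\lambda_n>0$ so that (\ref{wf3}) is meaningful. Property 1 already supplies $\hat{g}_n^{\nu}>0$, while $\lambda>0$ follows from the same scaling argument used in Theorem 1 (if $\lambda=0$, $\vec{F}$ could be scaled up while preserving the equal-rate structure, contradicting optimality). For each $\mu_n$, observe that if $\mu_n=0$ the $f_{n,n}$-stationarity equation reduces to $\lambda f_{n,n}\hat{g}_n^{\nu}=0$, forcing $f_{n,n}=0$; this would drive $R^{\mathrm{user}}$ to zero by Lemma 1, contradicting optimality whenever $P_{\mathrm{T}}>0$. Hence all $\mu_n>0$, the ratios $\lambda_n=\lambda/\mu_n$ are strictly positive, and the $N$ parameters $\lambda_n$ are implicitly pinned down by the equal-rate system together with the power-budget equality. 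I expect this positivity/normalization step to be the main subtlety of the proof --- the two stationarity computations themselves are essentially a re-run of Theorem 1, but verifying that the introduced per-user multipliers can be consistently normalized to produce a valid water-filling for every user is where a careful argument is required.
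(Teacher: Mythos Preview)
Your proposal is correct and follows essentially the same Lagrangian approach as the paper. The only minor difference lies in the positivity step you flagged: the paper argues directly from the stationarity equation that if either $\lambda=0$ or some $\mu_n=0$ then necessarily $\lambda=\mu_n=0$ for all $n$, contradicting $\sum_n\mu_n=1$, whereas you invoke a scaling argument for $\lambda>0$ and a rate-collapse argument for $\mu_n>0$; both routes reach the same conclusion.
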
 
\begin{proof}
The Lagrangian function for multiple constraints in this case reads
 \bea  \label{costfun2} \mathcal{L}= R^{\mathrm{user}}-\sum_{n=1}^{N}\mu_n\left(R^{\mathrm{user}}-\log\left(1+\frac{|f_{n,n}|^2}{N_0}\right)\right) -\lambda\left(\mathrm{Tr}\left(\vec{F}\rmh\vec{G}\vec{F}\right)- P_{\mathrm{T}}\right),  \eea
and the necessary conditions are
\bea   \label{KKT2}   \left.\begin{aligned}
\frac{\partial \mathcal{L} }{\partial f_{n,k}}\! =\!0,\; 1\leq n, k&\leq N\;\, \\
\sum_{n=1}^N\mu_n\!=\!1,\;\mu_n\geq 0, \; 1\leq n&\leq N\;\, \\
R^{\mathrm{user}}=\log\left(1+\frac{|f_{n,n}|^2}{N_0}\right), \; 1\leq n&\leq N \;\, \\
  \mathrm{Tr}\left(\vec{F}\rmh\vec{G}\vec{F}\right)- P_{\mathrm{T}}&= 0\;\, \\
    \lambda&\geq 0 \;\, \end{aligned}\right\}. 
\eea
The first-order derivatives of $\mathcal{L}$ with respect to $f_{n,n}$ is
 \bea  \label{der12} \frac{\partial\mathcal{L} }{\partial f_{n,n}} =\frac{\mu_n N_0f_{n,n}}{
N_0+|f_{n,n}|^2}-\lambda\left(f_{n,n}g_{n,n}+\left(\vec{f}_n^{\nu}\right)\rmh\vec{g}_n^{\nu} \right), \eea
while the gradient of $\mathcal{L}$ with respect to $\vec{f}_n^{\nu}$ is in (\ref{der2}). Then, from (\ref{der2}) the optimal $\vec{f}_n^{\nu}$ is solved in (\ref{optfnk}), and by inserting (\ref{optfnk}) back into (\ref{der12}) and setting the derivative to zero, we obtain
\bea  \label{optder1} \frac{N_0\mu_n}{
N_0+|f_{n,n}|^2}=\lambda\left(g_{n,n}-\left(\vec{g}_{n}^{\nu}\right)\rmh\left(\vec{G}_{n+1}^{\nu-1}\right)^{-1}\vec{g}_{n}^{\nu}\right).\eea
Hence, as $N_0\!>\!0$, from (\ref{optder1}) it holds that $\lambda\!>\!0$ and $\mu_n\!>\!0$ for all $n$. Otherwise, if either $\lambda\!=\!0$ or $\mu_n\!=\!0$ for some $n$, from (\ref{optder1}) it holds that $\lambda\!=\!\mu_n\!=\!0$ for all $n$, which contradicts the second necessary condition in (\ref{KKT2}) (due to $\partial\mathcal{L} /\partial R^{\mathrm{user}}\!=\!0$). By setting $\lambda_n\!=\!\lambda/\mu_n\!>\!0$ and from (\ref{optder1}) the optimal $f_{n,n}$ equals
\bea  \left|f_{n,n}\right|^2=N_0\left[\frac{1}{\lambda_n\hat{g}_n^{\nu}}-1\right]^{+}, \notag\eea
where $\hat{g}_{n,n}$ is defined in (\ref{hatg}), and the optimal $f_{n,n}$ is then in (\ref{wf3}).  
\end{proof}

With the necessary conditions of an optimal $\vec{F}$ in Theorem 2, the user-rate is equal to the minimum user-rate for all users, that is,
\bea  \label{raten} R^{\mathrm{user}}=\big[-\log\left( \lambda_n \hat{g}_n^{\nu}\right)\big]^{+},\quad 1\leq n\leq N,\eea
and the power constraint can be written as
\bea  \label{trace4}\sum_{n=1}^{N}\left[\frac{1}{\lambda_n}-\hat{g}_n^{\nu}\right]^{+}= \frac{P_{\mathrm{T}}}{N_0}. \eea
Note that, different from the sum-rate maximization, now the water-level $1/\lambda_n$ varies for different users. From (\ref{raten}) and (\ref{trace4}), the minimum user-rate can be solved for in closed-form,
\bea  R^{\mathrm{user}}=\log\!\left(1+\frac{P_{\mathrm{T}}}{N_0\sum\limits_{n=1}^{N}\hat{g}_n^{\nu}}\right),\eea
and the optimal $f_{n,n}$ equals
\bea  f_{n,n}=\sqrt{N_0\left(2^{R^{\mathrm{user}}}-1\right)},\quad 1\leq n\leq N. \eea
Although with the sum-rate maximization some user-rates may be decreased with a larger $\nu$ as shown in Corollary 1, for minimum user-rate maximization, $R^{\mathrm{user}}$ will not be decreased by a larger $\nu$. Further, as the maximal minimum user-rate $R^{\mathrm{user}}$ in (\ref{raten}) is uniquely determined by the values of $\hat{g}_n^{\nu}$, we have the below property.
\begin{property}
The conclusions drawn for sum-rate $R^{\mathrm{sum}}$ in Property 2 also stand for minimum user-rate $R^{\mathrm{user}}$.
\end{property}

\subsection{Optimal User-Orderings}
By permuting the order of the $N$ users with an $N\!\times\!N$ permutation matrix $\vec{Q}$, the received signal model (\ref{md1}) reads
\bea \label{md4} \vec{Q}\vec{y} =\vec{Q}\vec{H}\vec{P}\vec{x}+\vec{Q}\vec{z}.\eea
Changing the order of the users may impact\footnote{This is true for cases $0\!<\!\nu\!<\!N$. For $\nu\!=\!0$, i.e., the linear ZF precoder, as the inter-user interferences are completely nulled out, different user-orderings have no impact on both the sum-rate or minimum user-rate maximizations.} the optimizations in (\ref{prbm3}) and (\ref{prbm4}), due to that the matrix $\vec{G}$ is updated with $\tilde{\vec{G}}\!=\!\vec{Q}\vec{G}\vec{Q}\rmh$ and the power constraint changes to,
\bea  \label{con3} \mathrm{Tr}\left(\vec{F}\rmh\tilde{\vec{G}}\vec{F}\right)\leq P_{\mathrm{T}},\eea
Denoting the set that comprises all possible user-orderings as $\mathcal{P}$, and as the size $|\mathcal{P}|\!=\!N!$, it is infeasible to find an optimal ordering in a brute-force manner for large values of $N$. Therefore, we next introduce two efficient suboptimal user-ordering algorithms for the sum-rate and the minimum user-rate maximizations for $0\!<\!\nu\!<\!N$ that have complexity orders $\mathcal{O}\left({N}\choose{\nu\!+\!1}\right)$ and $\mathcal{O}(N)$, respectively. We start with the user-ordering for the sum-rate maximization (\ref{prbm3}). From (\ref{dual1}), the optimal user-ordering $\mathcal{U}\!\in\!\mathcal{P}$ shall minimize the product\footnote{Without loss of generality, we assume $\lambda \hat{g}_{n,n}\!\geq\!1$ holds for all users for both sum-rate and minimum user-rate maximizations.}, 
 \bea \label {p} \mathcal{U}^{\mathrm{opt}}=\argmin_{\mathcal{U}\in\mathcal{P}}\lambda ^N\prod_{n=1}^{N}\hat{g}_{n}^{\nu}.   \eea
Denoting $q\!=\!\prod\limits_{n=1}^{N}\hat{g}_{n}^{\nu}$ and since
\bea \lambda=\left(\frac{ P_{\mathrm{T}}}{N_0}+\sum_{n=1}^{N}\hat{g}_{n}^{\nu}\right)^{-1}\leq\left(\frac{ P_{\mathrm{T}}}{N_0}+N q^{\frac{1}{N}}\right)^{-1}, \notag \eea
it holds that
\bea \label{p1} \lambda ^N q\leq\left(\frac{ P_{\mathrm{T}}}{N_0 q^{\frac{1}{N}}}+N\right)^{-N}. \eea \notag
Instead of directly minimizing (\ref{p}), from (\ref{p1}) we can minimize the product $q$ instead. On the other hand, from (\ref{Gn1}) and utilizing the matrix determinant lemma \cite{D08}, $\hat{g}_{n}^{\nu}$ can be rewritten as
$ \hat{g}_{n}^{\nu}\!=\!\det{\vec{G}_{n}^{\nu}}/\det{\vec{G}_{n+1}^{\nu-1}}$, and $q$ equals
\bea \label{prod2} q=\prod_{n=1}^{N}\frac{\det{\vec{G}_{n}^{\nu}}}{\det{\vec{G}_{n+1}^{\nu-1}}} . \eea
By noticing that the sub-matrix $\vec{G}_{n}^{\nu}$ comprises $\vec{G}_{n+1}^{\nu-1}$ and an extra row and column vectors corresponding to the $n$th user, we can recursively order the users according to (\ref{prod2}) as follows.

At a first stage, to minimize $\hat{g}_{1}^{\nu}$ we first find the best $\nu\!+\!1$ users that minimize $\det{\vec{G}_{1}^{\nu}}$, which needs to search over in total ${N}\choose{\nu\!+\!1}$ possible user combinations\footnote{Note that, the ordering of the $\nu\!+\!1$ users inside each combination is independent with $\det{\vec{G}_{n}^{\nu}}$ since the determinant is invariant under the operation that permutes the row and column vectors in the same manner.}. We denote the index set of the obtained $\nu\!+\!1$ users as $\mathcal{J}_1$. Then, in a second step, we select one single user from the chosen $\nu\!+\!1$ users that maximize $\det{\vec{G}_2^{\nu-1}}$, where $\det{\vec{G}_2^{\nu-1}}$ is obtained by removing the corresponding row and column vectors of the selected user in $\vec{G}_{1}^{\nu}$. One such user is selected to be the first user and set $\mathcal{U}(1)$ to its user-index.

At a second stage, we continue to order the remaining $N\!-\!1$ users, with $\nu$ users within the index set $\mathcal{J}_2\!=\!\mathcal{J}_1\!\setminus \mathcal{U}(1)$. In order to minimize $\hat{g}_{2}^{\nu}$, we first add another user from the remaining $N\!-\!\nu\!-\!1$ users to the $\nu$ users in $\mathcal{J}_2$ and calculate $\det\vec{G}_{2}^{\nu}$ corresponding to the selected $\nu\!+\!1$ users. The user from the remaining $N\!-\!\nu\!-\!1$ users that minimize $\det{\vec{G}_{2}^{\nu}}$ is selected, which needs $N\!-\!\nu\!-\!1$ operations. We update $\mathcal{J}_1$ as $\mathcal{J}_2$ plus the selected user-index. Then, we repeat the second step at the first stage to select one user from $\mathcal{J}_2$ (not $\mathcal{J}_1$ in order to keep the value of  $\hat{g}_{1}^{\nu}$ unchanged) to maximize $\det{\vec{G}_{3}^{\nu-1}}$, and set $\mathcal{U}(2)$ to the index of that user.

\vspace{-2mm}
\begin{algorithm}[ht!]
	\caption{User-ordering for sum-rate maximization with the GZF-DP precoder.}
	\label{alg:2}
      \begin{algorithmic}[1]
      \vspace{1mm}
      \STATE Initialize $n\!=\!1$ and $\mathcal{I}_1\!=\!\mathcal{I}_2\!=\![1,2,\cdots,N]$. \\
      \STATE Search over all ${N}\choose{\nu\!+\!1}$ possible combinations to find the best $\nu\!+\!1$ users that minimizes the determinant of the principle sub-matrix $\det{\vec{G}_{1}^{\nu}}$ introduced by their indexes, and denote the best user-combination as $\mathcal{J}_1$, then set $\mathcal{J}_2\!=\!\mathcal{J}_1$.\\
       \STATE Select one single user from all users in $\mathcal{J}_2$ to maximize $\det{\vec{G}_{2}^{\nu-1}}$, and denote its user-index as $\mathcal{U}(n)$.\\
       \STATE Update $\mathcal{I}_1\!=\!\mathcal{I}_1\!\setminus\!\mathcal{U}(n)$, $\mathcal{J}_2\!=\!\mathcal{J}_1\!\setminus\!\mathcal{U}(n)$, $\mathcal{I}_2\!=\!\mathcal{I}_1\!\setminus\!\mathcal{J}_2$, and set $n\!=\!n\!+\!1$.\\
       \STATE Replace the index $\mathcal{U}(n\!-\!1)$ in $\mathcal{J}_1$ with another user-index from the $N\!-\!\nu\!-\!n$ users in $\mathcal{I}_2$, such that $\det{\vec{G}_{n}^{\nu}}$ introduced by the updated $\mathcal{J}_1$ is minimized, and keep the updated $\mathcal{J}_1$. \\
        \STATE Repeat Step 3-5 until $\mathcal{I}_2$ is empty. Then, recursively order the remaining $\nu$ users such that $\det{\vec{G}_{n+1}^{\nu-1}}$ is maximized at each stage. \\
         \STATE Output the user-ordering $\mathcal{U}$.\\
\end{algorithmic}
\end{algorithm}
\vspace{-4mm}

Then, we update $\mathcal{J}_2\!=\!\mathcal{J}_1\!\setminus\!\mathcal{U}(2)$, and continue to order the remaining $N\!-\!2$ users in the same way until we finish the ordering of all users. Notice that, for the last $\nu$ users, we only need to recursively select the best user that maximizes $\det{\vec{G}_{n+1}^{\nu-1}}$. Such an algorithm is summarized in Algorithm 1.

Next, we analyze the user-ordering for the minimum user-rate maximization, which renders a simpler user-ordering algorithm. From (\ref{trace4}), it holds that
\bea \sum_{n=1}^{N}\hat{g}_{n}^{\nu}\left(2^{R^{\mathrm{user}}}-1\right)\leq \frac{P_{\mathrm{T}}}{N_0}.  \eea
Therefore, the optimal user-ordering that maximizes $R^{\mathrm{user}}$ shall minimize the sum of $\hat{g}_{n}^{\nu}$, 
 \bea \mathcal{U}^{\mathrm{opt}}=\argmin_{\mathcal{U}\in\mathcal{P}}\sum_{n=1}^{N}\hat{g}_{n}^{\nu}.   \eea
As for the last user, $\hat{g}_{N}^{\nu}\!=\!g_{N, N}$ holds, we can select the user that has the smallest diagonal element $g_{n, n}$ to be the last user $\mathcal{U}(N)$. Then, for the second last user, as
  \bea \hat{g}_{N-1}^{\nu}=g_{N-1, N-1}-\frac{|g_{N-1, N}|^2}{g_{N, N}},   \eea
we can choose the user that has the second smallest diagonal element $g_{n, n}$ to be the second last user $\mathcal{U}(N\!-\!1)$. Recursively, based on (\ref{hatg}), the users can be ordered in a descending order of $g_{n, n}$, which is summarized in Algorithm 2.
\vspace{-2mm}
\begin{algorithm}
	\caption{User-ordering for minimum user-rate maximization with the GZF-DP precoder.}
	\label{alg:3}
      \begin{algorithmic}[1]
            \vspace{1mm}
      \STATE Order the user according to the descending order of the diagonal element $g_{n, n}$.
\end{algorithmic}
\end{algorithm}
\vspace{-4mm}

\section{Empirical Results}
In this section, simulation results are presented to show the promising performance of the proposed GZF-DP precoder for both the sum-rate and minimum user-rate maximizations. The sum-rate of the optimal DPC \cite{JV04} serve as the upper-bound, while the sum-rate and minimum user-rate of the linear ZF precoder serve as lower-bounds. For comparisons, we also present the rates of the UG-DP precoder in \cite{ML16} for sum-rate maximizations, which are inferior to the GZF-DP precoder with $\nu\!=\!N_g\!-\!1$ and similar DPC complexity. In all simulations, we set the noise power $N_0\!=\!1$ and test under Rayleigh fading channels that are based on the Kronecker correlation model
\bea  \vec{H}\!=\!\vec{R}_{\mathrm{R}}^{1/2}\vec{H}_{\mathrm{IID}}\vec{R}_{\mathrm{T}}^{1/2},\eea
where $N\!\times\!M$ matrices $\vec{H}_{\mathrm{IID}}$ denote IID complex Gaussian channels with zero mean and a covariance matrix being an identity matrix. The $M\!\times\!M$ matrix $\vec{R}_{\mathrm{T}}$ and $N\!\times\!N$ matrix $\vec{R}_{\mathrm{R}}$ denote the correlations at the transmit and receive sides, respectively. We use an exponential correlation model \cite{S01} for both $\vec{R}_{\mathrm{T}}$ and $\vec{R}_{\mathrm{R}}$, which is defined as
{\setlength\arraycolsep{2pt} \bea\vec{R}\!=\left[\begin{array}{ccccc}\!1&\beta&\cdots&\cdots&\beta^{K-1}\\ \beta&1&\beta&\cdots&\beta^{K-2}\\ \vdots&\ddots&\ddots&\ddots&\vdots\\ 
\vdots&\ddots&\ddots&\ddots&\beta\\
 \beta^{K-1}&\beta^{K-2}&\cdots&\beta&1\end{array}\right]\!,\eea}
\hspace*{-1.4mm}where $K\!=\!M$, $\beta\!=\!\beta_{\mathrm{T}}$ and $K\!=\!N$, $\beta\!=\!\beta_{\mathrm{R}}$ for transmit and receive correlation, respectively.

\subsection{Optimal Orderings}

In Fig. \ref{fig1}, we evaluate the sum-rate with the channel given in Example 1 for all possible $4!\!=\!24$ user-ordering schemes in $\mathcal{P}$. As it can be seen that, different user-orderings provide different sum-rate for $1\!\leq\!\nu\!\leq\!3$.

\begin{figure}[b]
\vspace*{-10mm}
\begin{center}
\hspace*{-0mm}
\scalebox{0.42}{\includegraphics{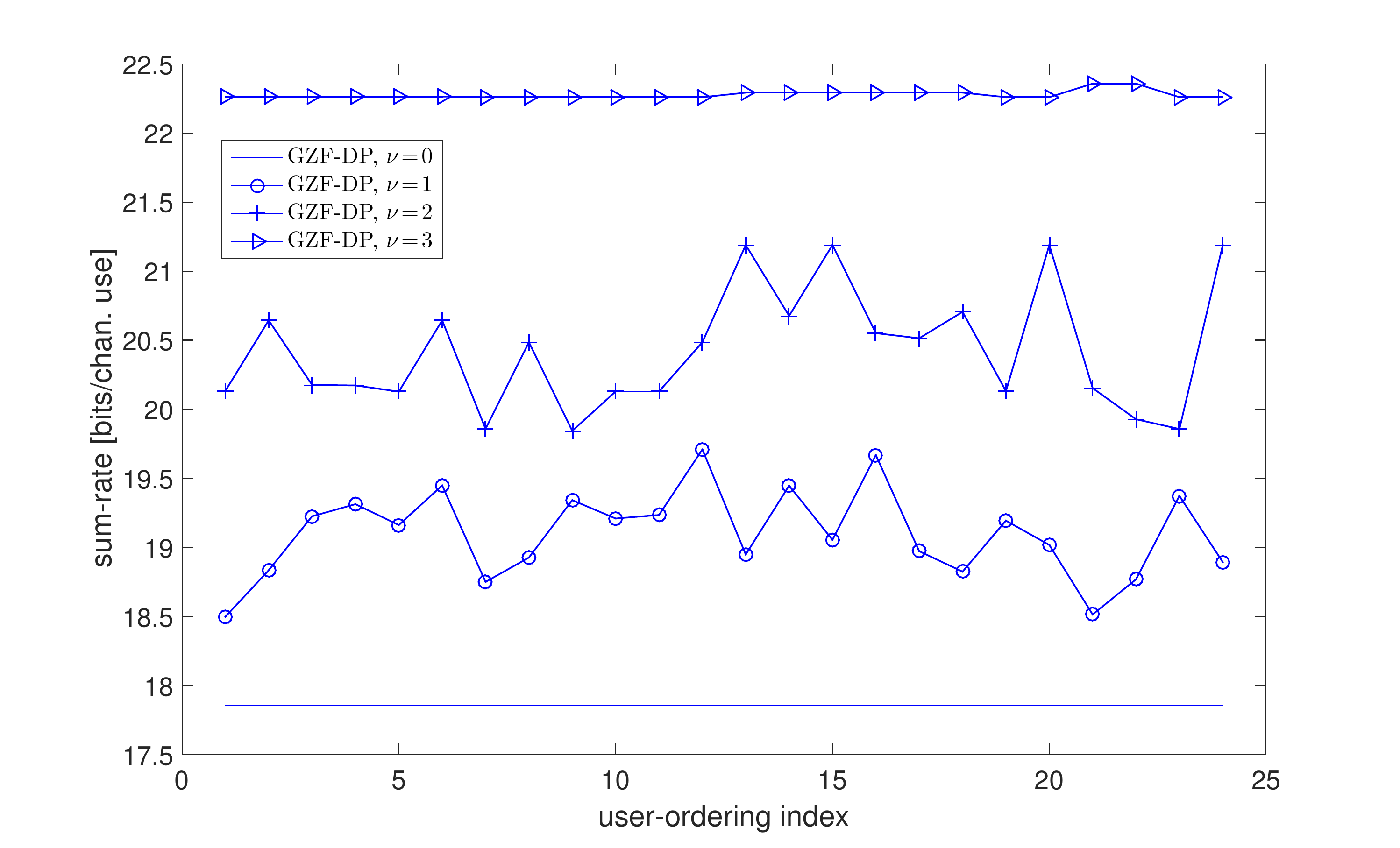}}
\vspace*{-8mm}
\caption{\label{fig1} The sum-rate of the GZF-DP precoder with different $\nu$ evaluated with $N_0\!=\!1$ and $P_{\mathrm{T}}\!=\!10$ dB. }
\vspace*{-6mm}
\end{center}
\end{figure}

In Fig. \ref{fig2}, we evaluate the performance of Algorithm 1 for user-ordering for the sum-rate maximization with $M\!=\!N\!=\!5$ and under IID complex Gaussian channels, that is, $\beta_{\mathrm{T}}\!=\!\beta_{\mathrm{R}}\!=\!0$. The optimal ordering utilizes the brute-force method to select one best user-ordering over all $5!\!=\!120$ possible combinations under each channel realization. The average sum-rate averages the sum-rate over all 120 user-orderings in $\mathcal{P}$. As can be seen, the proposed user-ordering performs 0.5 to 1 dB better than the averaged sum-rate in terms of transmit power $P_{\mathrm{T}}$.

In Fig. \ref{fig3}, we evaluate the performance of Algorithm 2 for user-ordering for the minimum user-rate maximization with $M\!=\!N\!=\!6$ and under IID complex Gaussian channels. As can be seen, the proposed Algorithm 2 performs around 1 dB better than the averaged sum-rate in terms of transmit power $P_{\mathrm{T}}$, and quite close to the optimal user-ordering that is selected over $6!\!=\!720$ possible schemes in $\mathcal{P}$ with brute-force method for each channel realization.

\subsection{Sum-rate Maximization}
Next we evaluate the sum-rate maximizations with $M\!=\!N\!=\!8$. In Fig. \ref{fig4} we simulate under IID complex Gaussian channels. As can be seen, the GZF-DP precoder with $\nu\!=\!1$ renders around 1.5 dB and 4 dB gains compared to the UG-DP precoder and the linear ZF precoder in terms of transmit power $P_{\mathrm{T}}$, respectively. With $\nu\!=\!3$, which means that in the effective channel we preserve at most interference from 3 other users for each of the users, the GZF-DP precoder is only less than 1.5 dB away from the optimal DPC, and performs quite close to the ZF-DP precoder \cite{CS03}, i.e., the GZF-DP precoder with $\nu\!=\!7$. 

In Fig. \ref{fig5}, we repeat the tests in Fig. \ref{fig4} under Rayleigh fading channels with correlation factors $\beta_{\mathrm{T}}\!=\!0.2$ and $\beta_{\mathrm{R}}\!=\!0.8$. As can be seen, the GZF-DP precoder with $\nu\!=\!1$ renders around 2 dB and 5 dB gains compared to the UG-DP and ZF precoders in this case, respectively. The $P_{\mathrm{T}}$ gains of the GZF-DP precoder are larger than those gains as in Fig. \ref{fig4}, due to the fact that the MISO broadcast channels are correlated in this case. Moreover, we also evaluate the GZF-DP precoder with user-ordering based on Algorithm 1. For the UG-DP precoder, we use the brute-force method to select the optimal user-ordering under each channel realization. As it can be seen, with user-orderings both the GZF-DP and UG-DP precoders renders higher sum-rates. But still, even with the optimal user-ordering, the UG-DP precoder is 1.5 dB away from the proposed GZF-DP precoder without user-ordering.

\begin{figure}[t]
\vspace*{-4mm}
\begin{center}
\hspace*{-0mm}
\scalebox{0.42}{\includegraphics{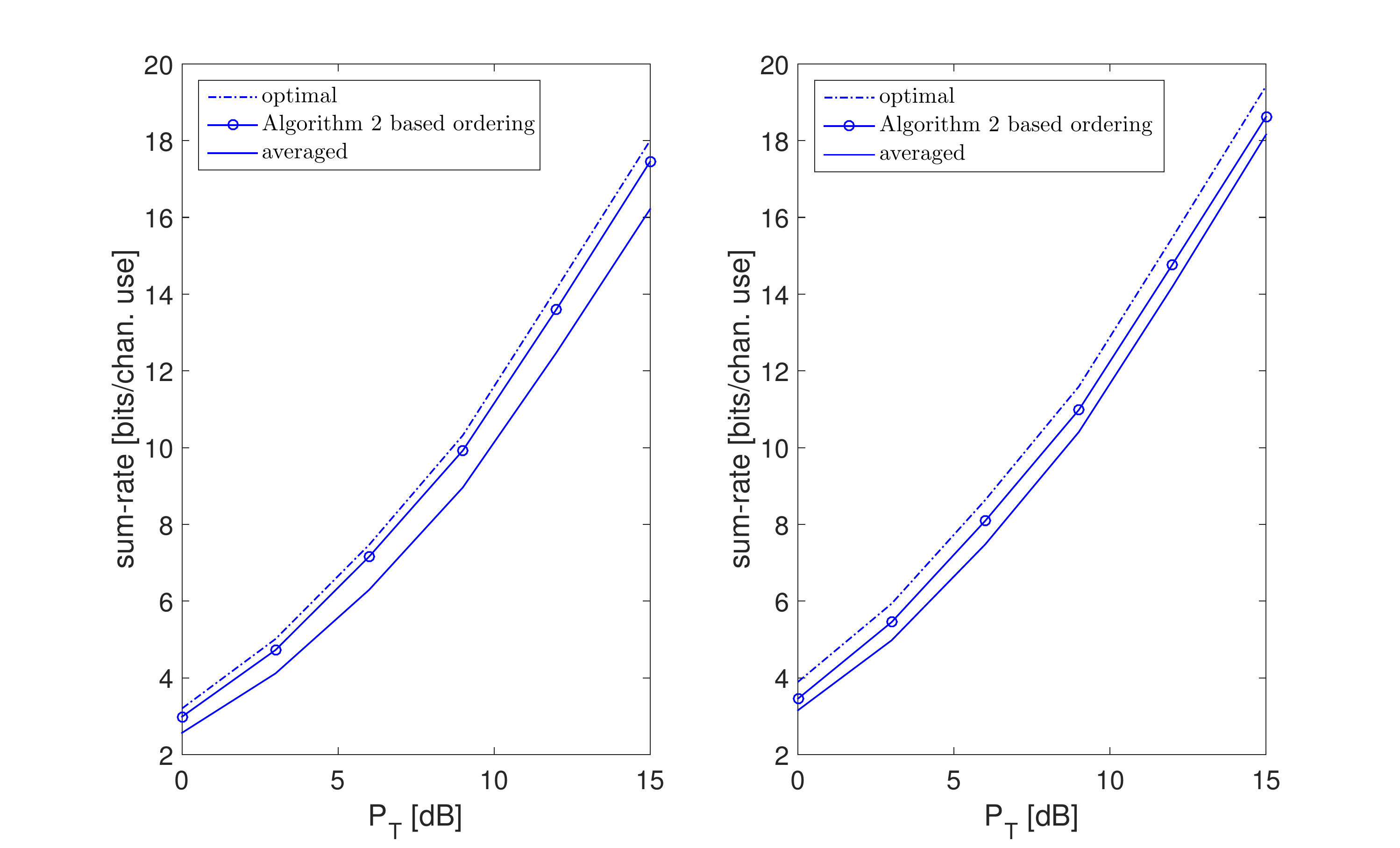}}
\vspace*{-8mm}
\caption{\label{fig2} The sum-rate of the proposed Algorithm 1 for user-ordering with the GZF-DP precoder with $\nu\!=\!1$ (the left figure) and $\nu\!=\!2$ (the right figure) for $M\!=\!N=5$.  }
\vspace*{-6mm}
\end{center}
\end{figure}

\begin{figure}
\vspace*{-6mm}
\begin{center}
\hspace*{-0mm}
\scalebox{0.42}{\includegraphics{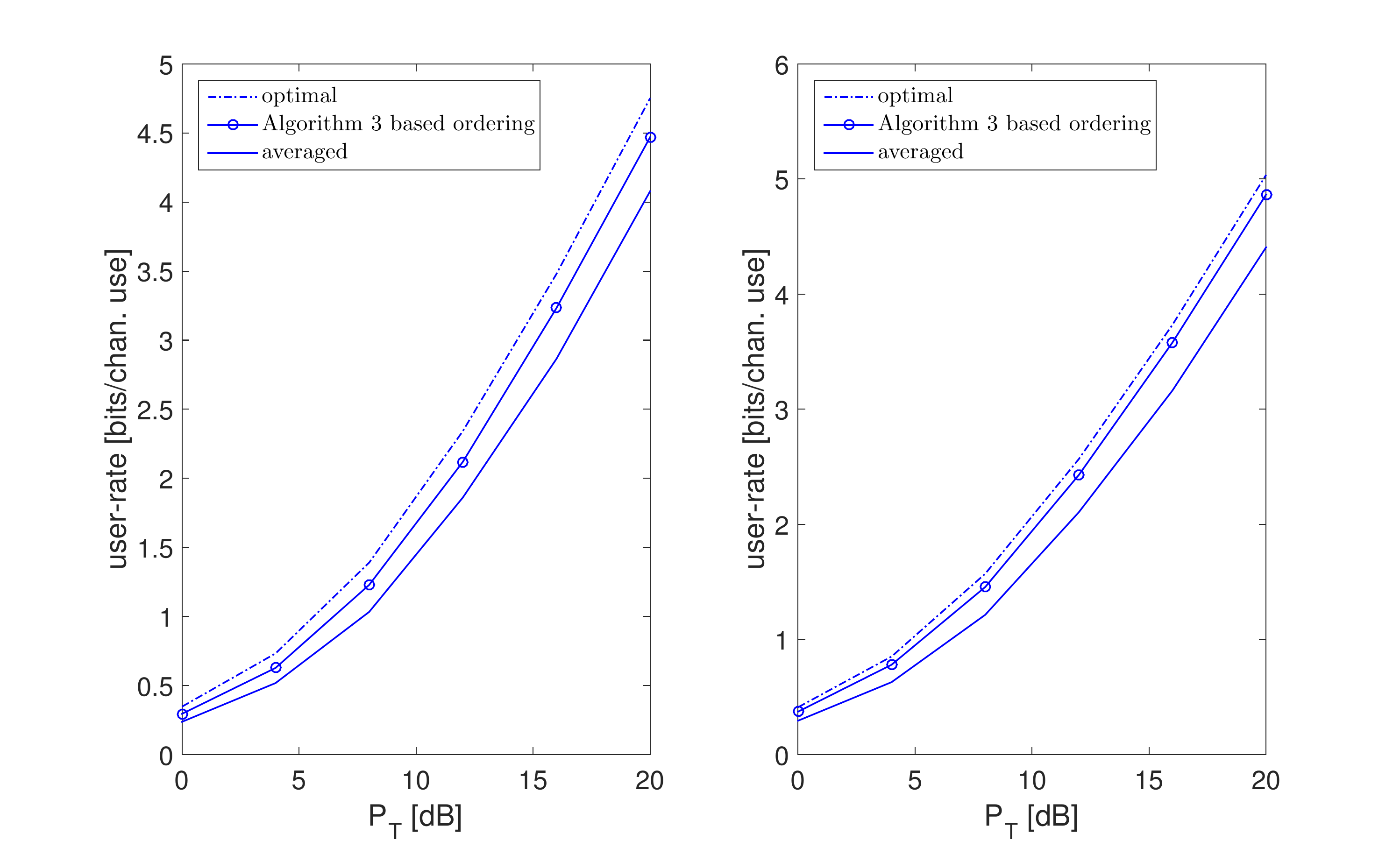}}
\vspace*{-8mm}
\caption{\label{fig3} The user-rates of the proposed Algorithm 2 for user-ordering with the GZF-DP precoder with $\nu\!=\!1$ (the left figure) and $\nu\!=\!2$ (the right figure) for $M\!=\!N\!=6$.  }
\vspace*{-10mm}
\end{center}
\end{figure}

\begin{figure}[t]
\vspace*{-4mm}
\begin{center}
\hspace*{16mm}
\scalebox{0.42}{\includegraphics{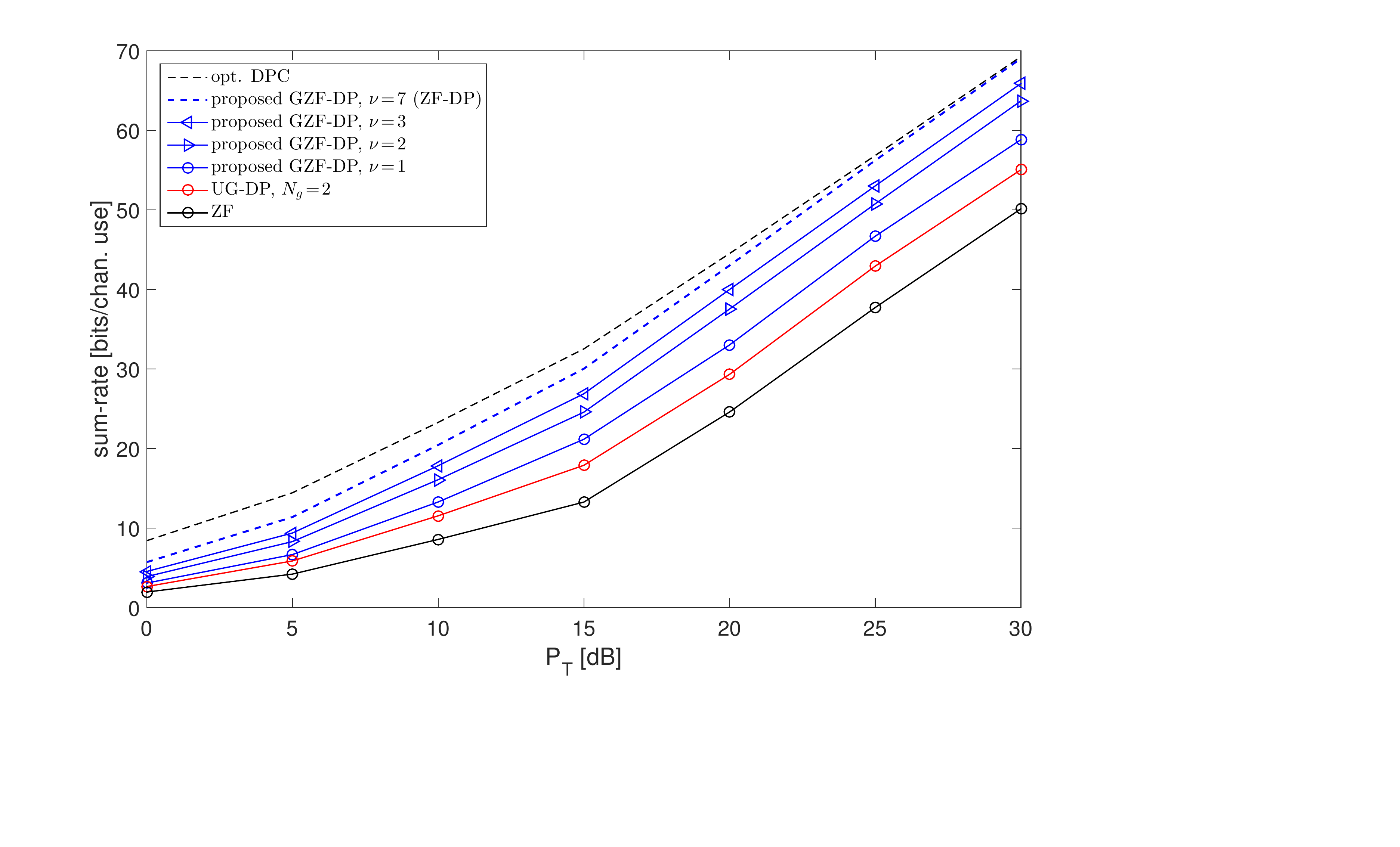}}
\vspace*{-34mm}
\caption{\label{fig4}The sum-rate maximization with $M\!=\!N\!=\!8$ under IID complex Gaussian channels.}
\vspace*{-6mm}
\end{center}
\end{figure}

\begin{figure}
\begin{center}
\vspace*{-6mm}
\hspace*{16mm}
\scalebox{0.42}{\includegraphics{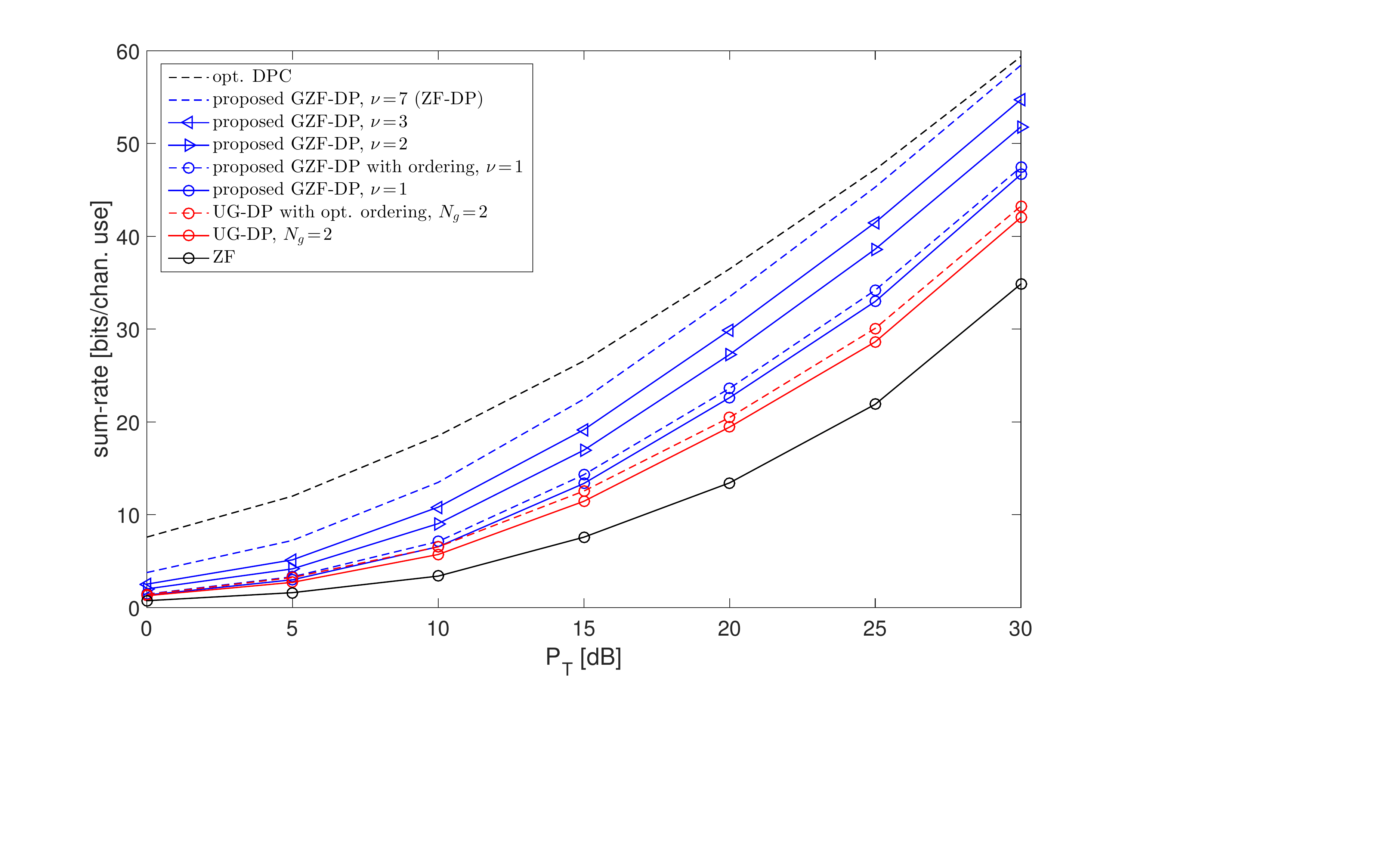}}
\vspace*{-34mm}
\caption{\label{fig5}The sum-rate maximization with $M\!=\!N\!=\!8$ under Rayleigh-fading channels and $\beta_{\mathrm{T}}\!=\!0.2$ and $\beta_{\mathrm{R}}\!=\!0.8$}
\vspace*{-10mm}
\end{center}
\end{figure}

\begin{figure}[t]
\vspace*{-4mm}
\begin{center}
\hspace*{16mm}
\scalebox{0.42}{\includegraphics{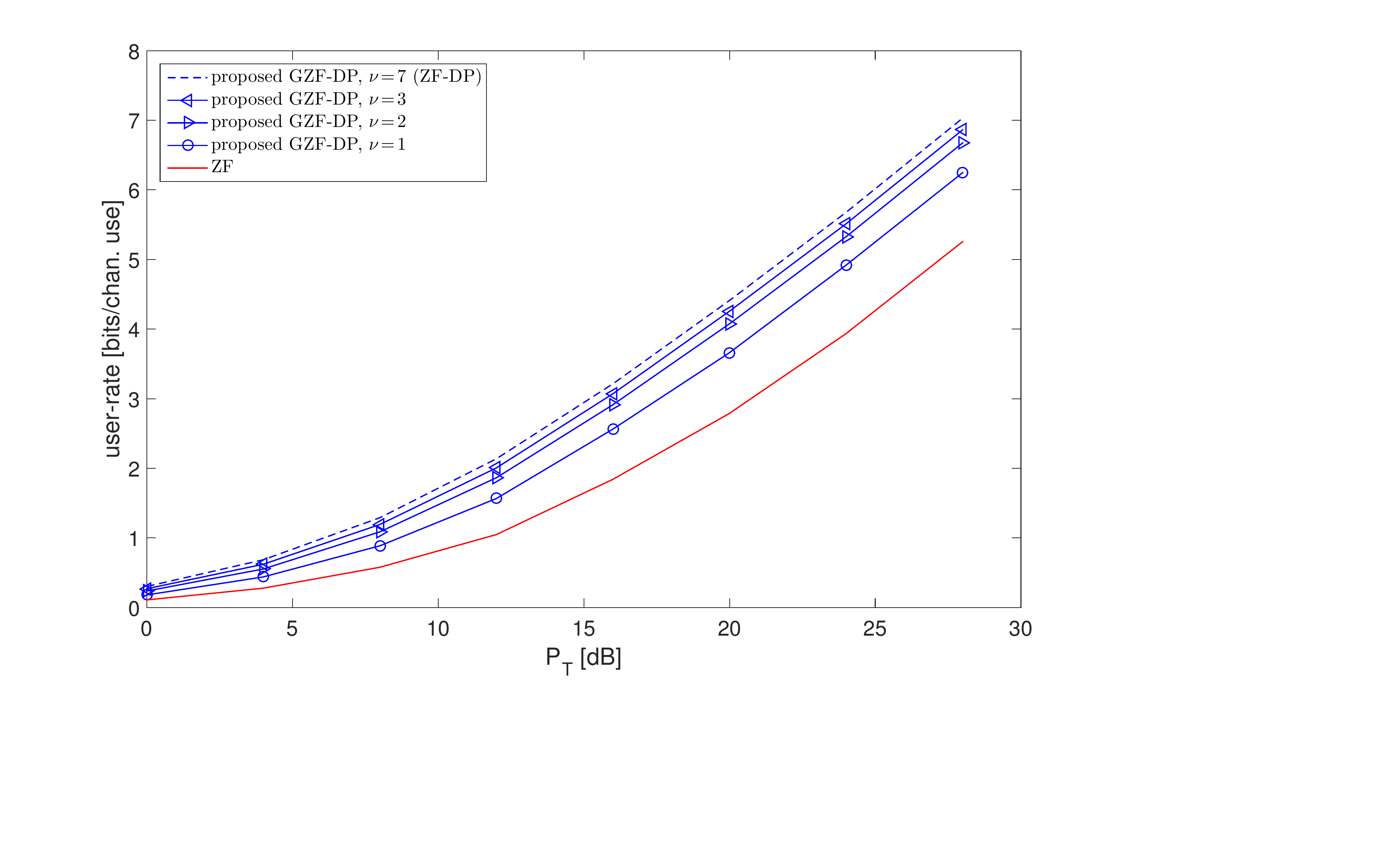}}
\vspace*{-34mm}
\caption{\label{fig6}Repeat the test in Fig. \ref{fig4} for minimum user-rate maximization under IID complex Gaussian channels.}
\vspace*{-6mm}
\end{center}
\end{figure}

\begin{figure}
\vspace*{-6mm}
\begin{center}
\hspace*{0mm}
\scalebox{0.42}{\includegraphics{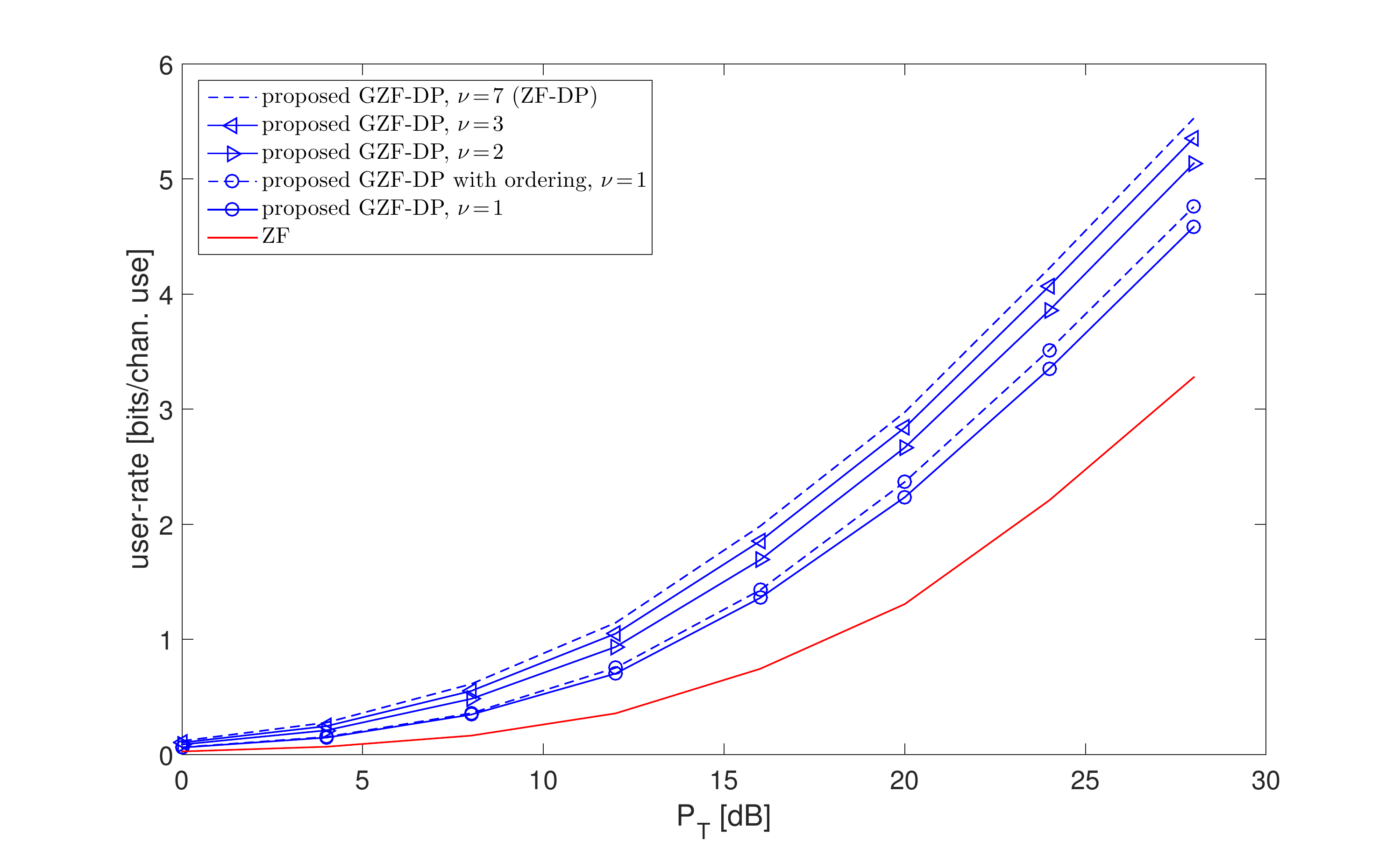}}
\vspace*{-8mm}
\caption{\label{fig7}Repeat the test in Fig. \ref{fig5} for minimum user-rate maximization under Rayleigh-fading channels.}
\vspace*{-10mm}
\end{center}
\end{figure}

\subsection{Minimum User-rate Maximization}
Next we evaluate the minimum user-rate maximizations with $M\!=\!N\!=\!8$ and repeat the tests in Fig. \ref{fig4} and Fig. \ref{fig5}, respectively. As can be seen, in Fig. \ref{fig6} the proposed GZF-DP precoder with $\nu\!=\!1$ is around 2 dB better than the linear ZF precoder, while in Fig. \ref{fig7} the gain is more than 4 dB due to spatial correlated channels. In addition, in both cases, the GZF-DP precoder with $\nu\!=\!3$ performs close to the GZF-DP precoder with $\nu\!=\!7$, i.e., the ZF-DP precoder.

\begin{figure}[t]
\vspace*{-4mm}
\begin{center}
\hspace*{0mm}
\scalebox{0.42}{\includegraphics{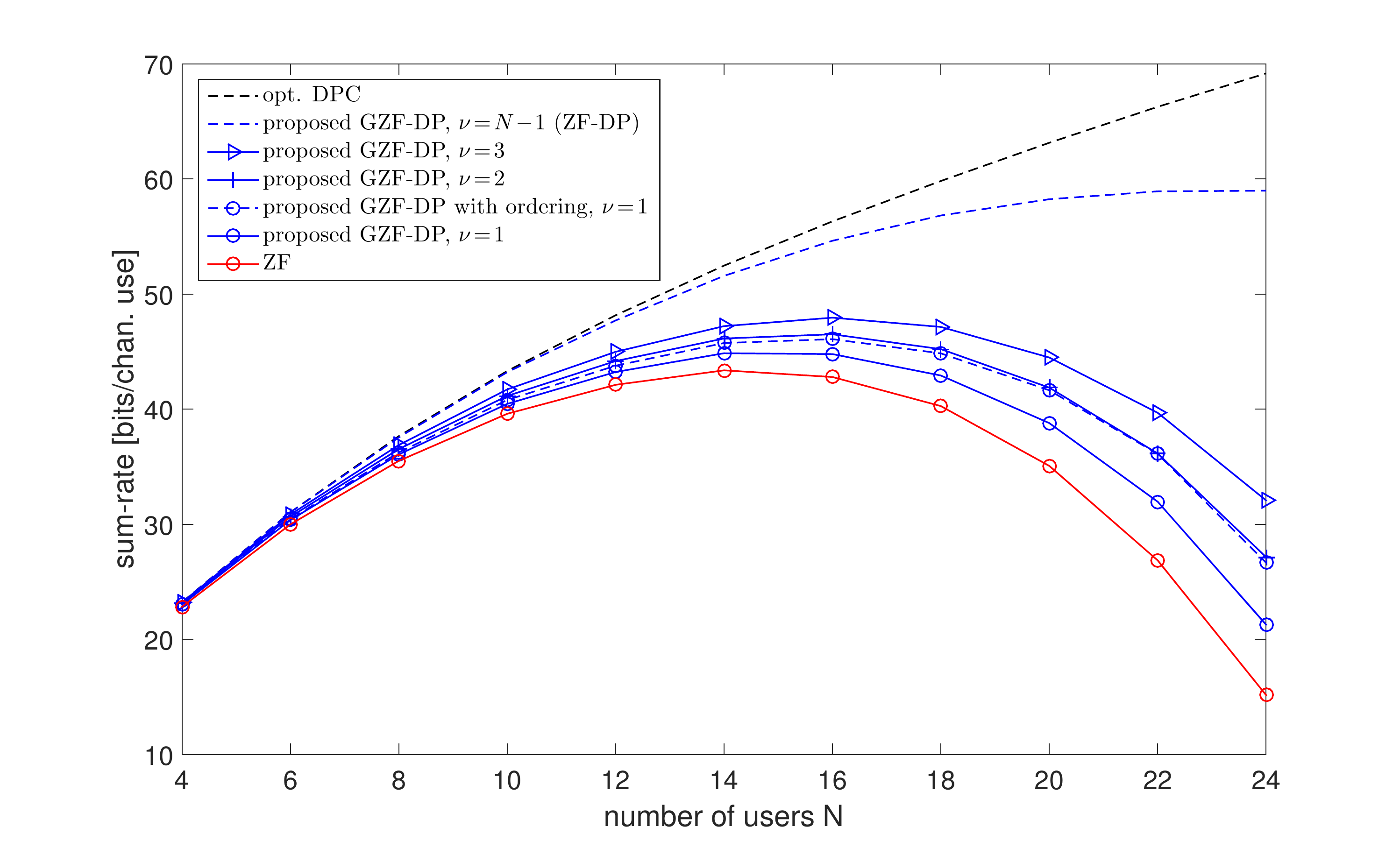}}
\vspace*{-8mm}
\caption{\label{fig8}The sum-rate maximization with $M\!=\!24$ and different number of users $N$ under IID complex Gaussian channels. Note that the transmit power is constant no matter the number of users $N$.}
\vspace*{-6mm}
\end{center}
\end{figure}

\begin{figure}
\vspace*{-6mm}
\begin{center}
\hspace*{16mm}
\scalebox{0.42}{\includegraphics{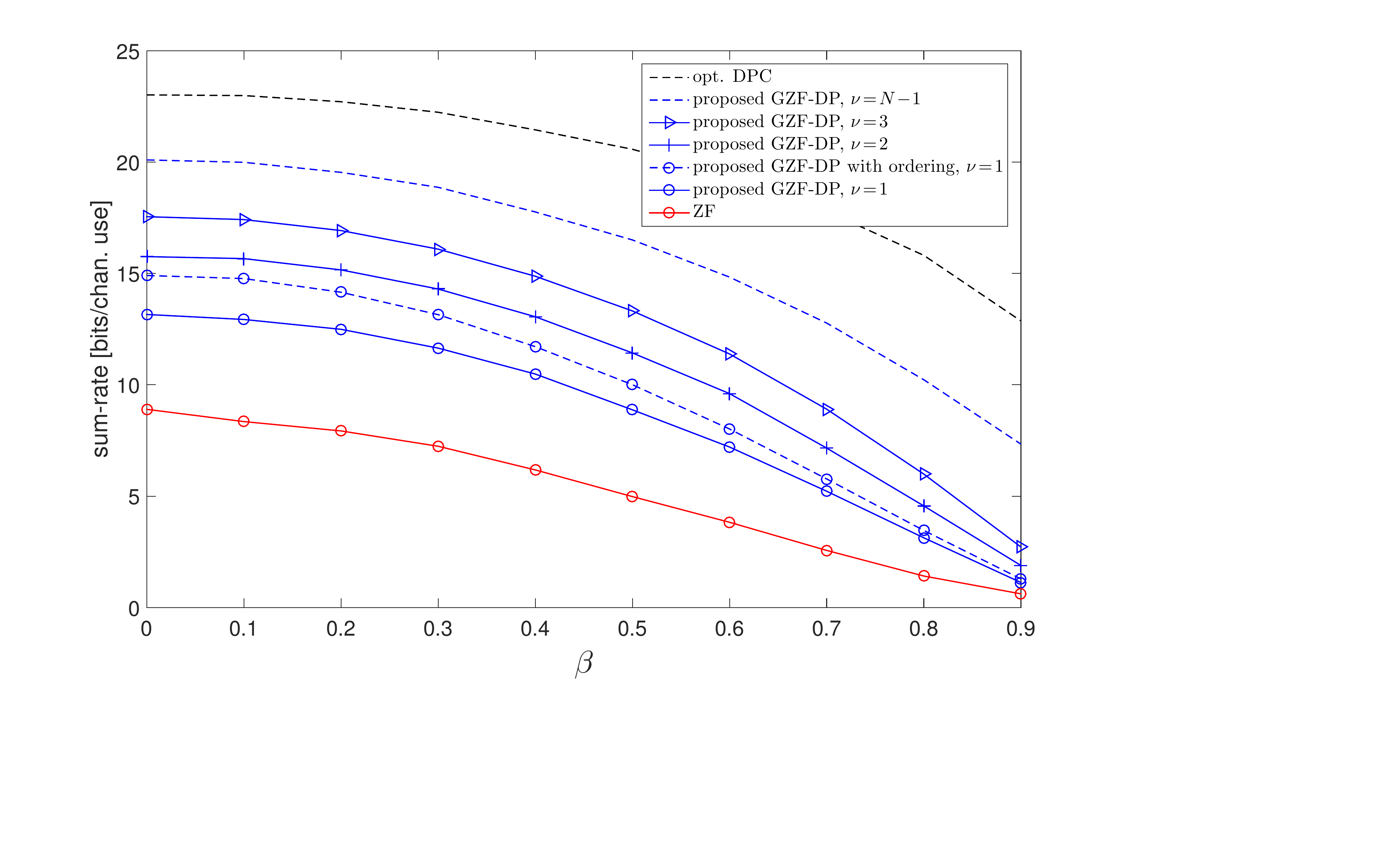}}
\vspace*{-34mm}
\caption{\label{fig9}The sum-rate maximization with $M\!=\!N\!=\!8$ under Rayleigh-fading channels. The correlation factors $\beta_{\mathrm{T}}\!=\!\beta_{\mathrm{R}}$, and change from 0.1 to 0.9.}
\vspace*{-10mm}
\end{center}
\end{figure}

\begin{figure}[t]
\vspace*{-4mm}
\begin{center}
\hspace*{0mm}
\scalebox{0.42}{\includegraphics{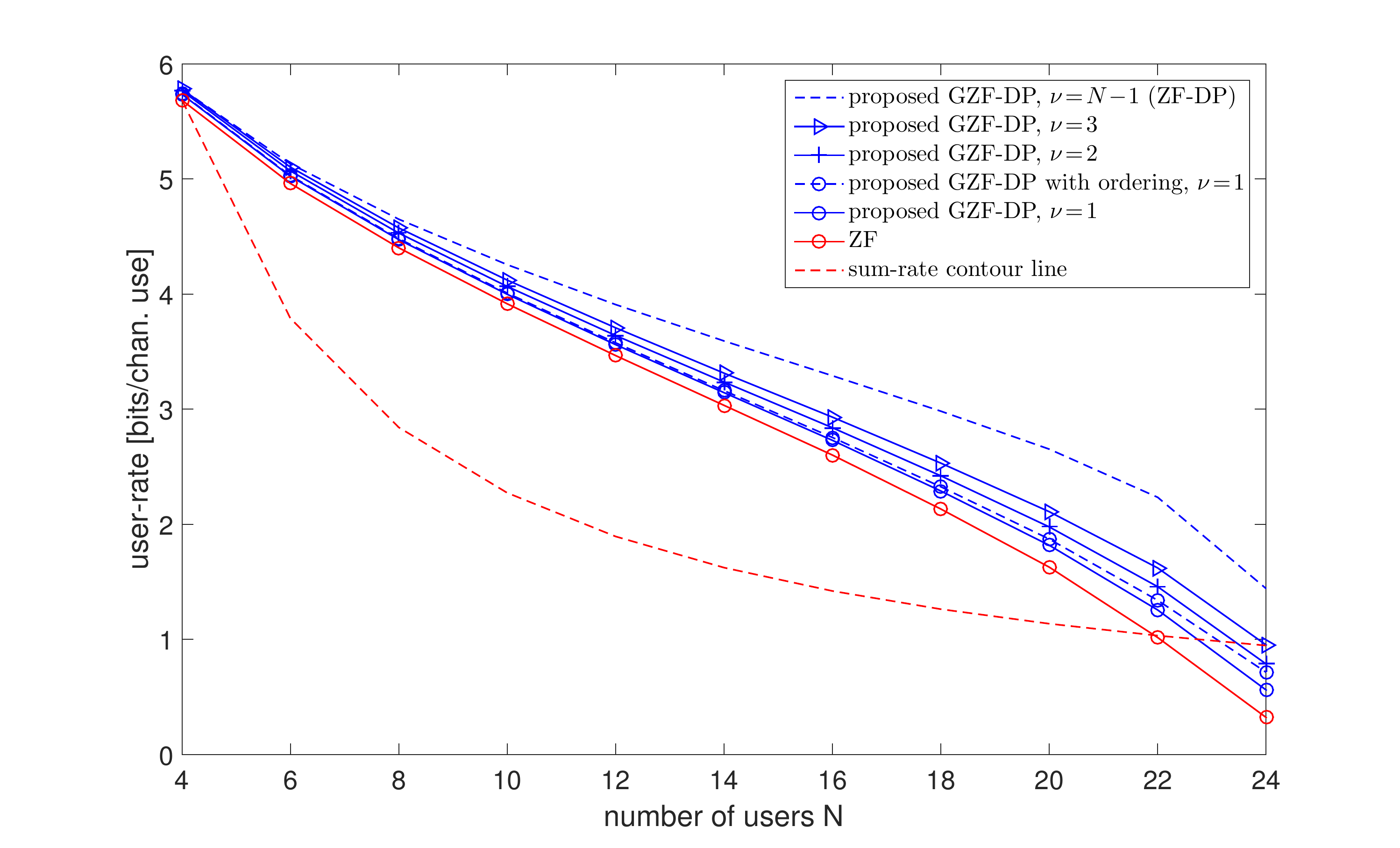}}
\vspace*{-8mm}
\caption{\label{fig10}Repeat the tests in Fig. \ref{fig8} for user-rate maximization.}
\vspace*{-6mm}
\end{center}
\end{figure}

\begin{figure}
\vspace*{-6mm}
\begin{center}
\hspace*{0mm}
\scalebox{0.42}{\includegraphics{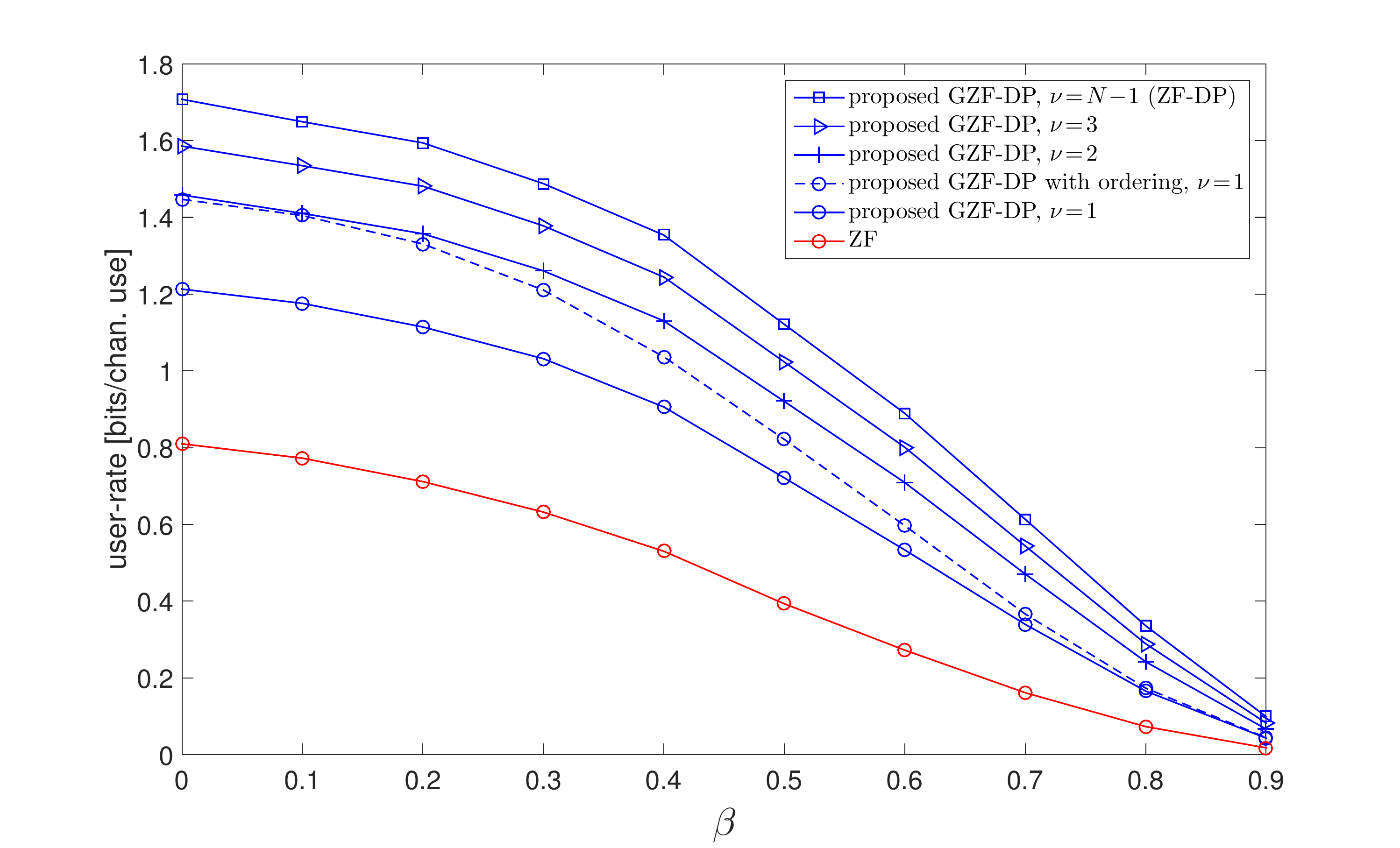}}
\vspace*{-8mm}
\caption{\label{fig11}Repeat the tests in Fig. \ref{fig9} for user-rate maximization.}
\vspace*{-10mm}
\end{center}
\end{figure}

\subsection{Impact of the Number of Users and Correlation Factors}
Next we evaluate the impacts of increasing the number of users and the spatial correlation factors. In all simulations, we set the total transmit power $P_{\mathrm{T}}\!=\!10$ dB. In Fig. \ref{fig8}, we set the number of transmit antennas $M\!=\!24$ and increase the user number $N$ from 4 to 24. As can be seen, as the number of users increases, the sum-rate first increases and then decreases both for the linear ZF precoder and the GZF-DP precoder with $\nu\!<\!N\!-\!1$. This is so, since as $N$ increases the degrees of freedom (DoF) for the precoder designs also increase and consequently the sum-rate is getting higher. However, the inter-user interference increased with a larger $N$ causes sum-rate degradation for small values of $\nu$. Nevertheless, it can be seen that the GZF-DP precoder with $\nu\!=\!1$ renders the same sum-rate as the ZF precoder with one user less.

In Fig. \ref{fig9}, we set $M\!=\!N\!=\!8$ and $\beta_{\mathrm{T}}\!=\!\beta_{\mathrm{R}}\!=\!\beta$. We increase $\beta$ from 0.1 to 0.9. As can be seen, as $\beta$ gets higher, the sum-rate decreases for all precoders. At low and medium correlations, the GZF-DP precoder shows significant gains over the linear ZF precoder. For instance, the GZF-DP precoder with $\beta\!=\!0.5$ renders the same sum-rate as the linear ZF precoder with $\beta\!=\!0$. Therefore, the GZF-DP precoder is more robust against the transmit and receive correlations compared to the linear ZF precoder. In addition, with the user-ordering proposed in Algorithm 1 the correlation gain is even larger.

In Fig. \ref{fig10}, we repeat the tests in Fig. \ref{fig8} for minimum user-rate maximizations. As can be seen, unlike the cases of the sum-rate maximizations, as the number of users increases, the user-rates of all precoder designs decrease. We also present a contour line of the sum-rate, which shows that the sum-rate also decreases when $N$ is close to $M$. For large $N$, we can see that the GZF-DP precoder with $\nu\!=\!1$ renders the same user-rate as the linear ZF precoder with one user less.

In Fig. \ref{fig11}, we repeat the tests in Fig. \ref{fig9} for minimum user-rate maximizations. As can be seen, as the correlation factor $\beta$ gets higher, the user-rates also decrease for all precoders. The GZF-DP precoder again shows superior performance compared to the linear ZF precoder, and is more robust against transmit and receive correlations.

\subsection{Practical FD-MIMO Scenario}
At last, we evaluate the proposed GZF-DP precoder in an FD-MIMO downlink scenario considering a 3D channel model \cite{KJ03}. The test scenario is depicted in Fig. \ref{fig12}, where we have an $8\!\times\!8$ 2D antenna-array deployed at an e-NodeB that is 20 meters above the ground. The spacing between to adjacent antenna elements (both in horizontal and vertical dimensions) is 1/2 wave-length. The e-NodeB broadcasts at 2.4 GHz to 8 single-antenna users that are placed along a line which is perpendicular to the 2D antenna-plane. The distance between two adjacent users is 10 meters and the first user is 20 meters away from the e-NodeB. For simplicity, we consider an ideal line-of-sight (LOS) situation with channels constructed from the free-space path loss.

As shown in Fig. \ref{fig13}, the sum-rate of the proposed GZF-DP precoder with $\nu\!=\!1$ is much higher than that of the linear ZF precoder. And with $\nu\!=\!3$ the GZF-DP precoder significantly outperforms the UG-DP precoder with $N_g\!=\!4$. Moreover, the GZF-DP precoder with $\nu\!=\!3$ also performs close to the ZF-DP precoder which requires a full successive DPC scheme.

\begin{figure}[t]
\vspace*{-4mm}
\begin{center}
\hspace*{0mm}
\scalebox{0.65}{\includegraphics{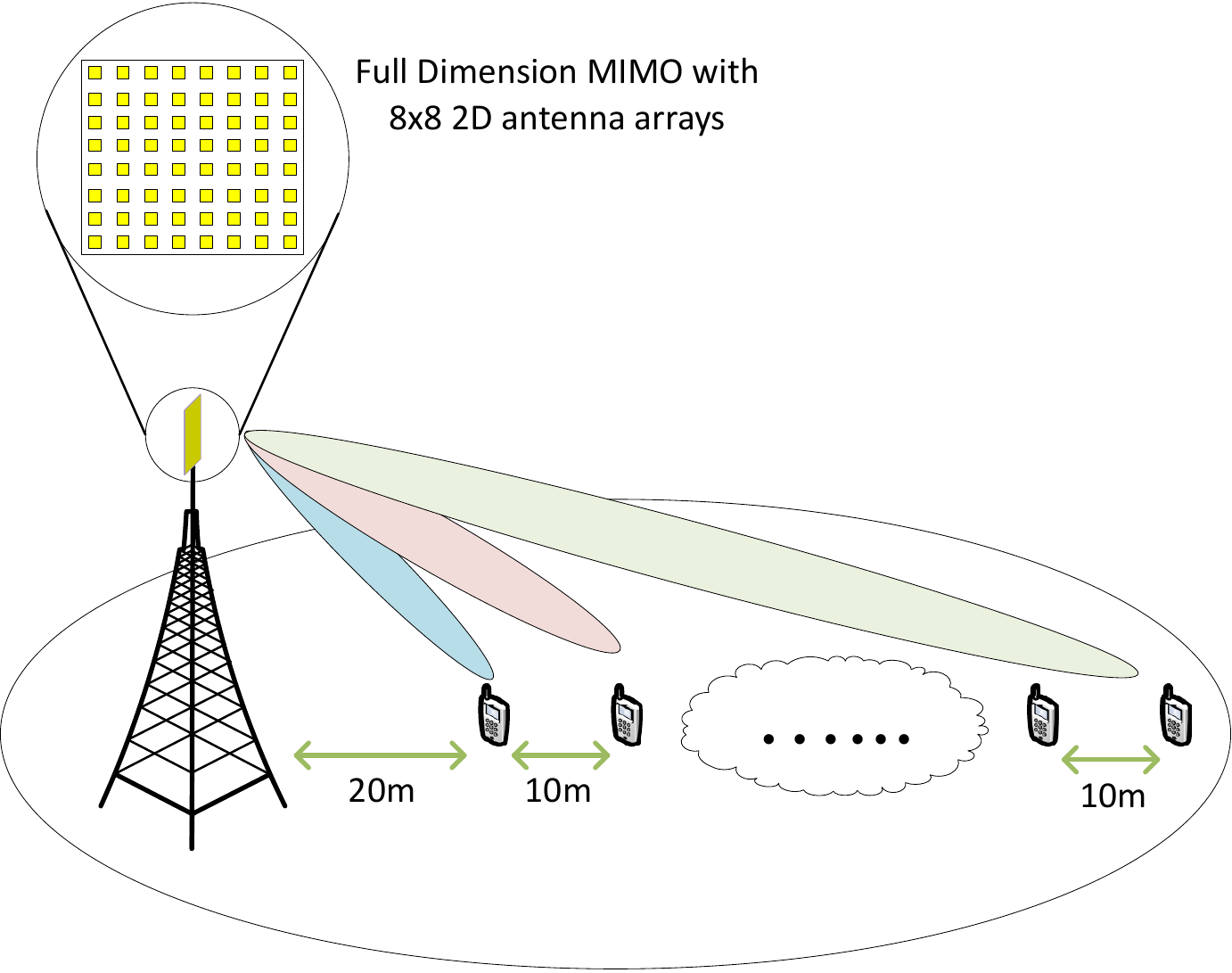}}
\vspace*{-5mm}
\caption{\label{fig12}An FD-MIMO scenario where an e-NodeB equipped with an $8\!\times\!8$ 2D antenna-array is broadcasting to 8 lined-up single-antenna users.}
\vspace*{-6mm}
\end{center}
\end{figure}

\begin{figure}
\vspace*{-6mm}
\begin{center}
\hspace*{16mm}
\scalebox{0.42}{\includegraphics{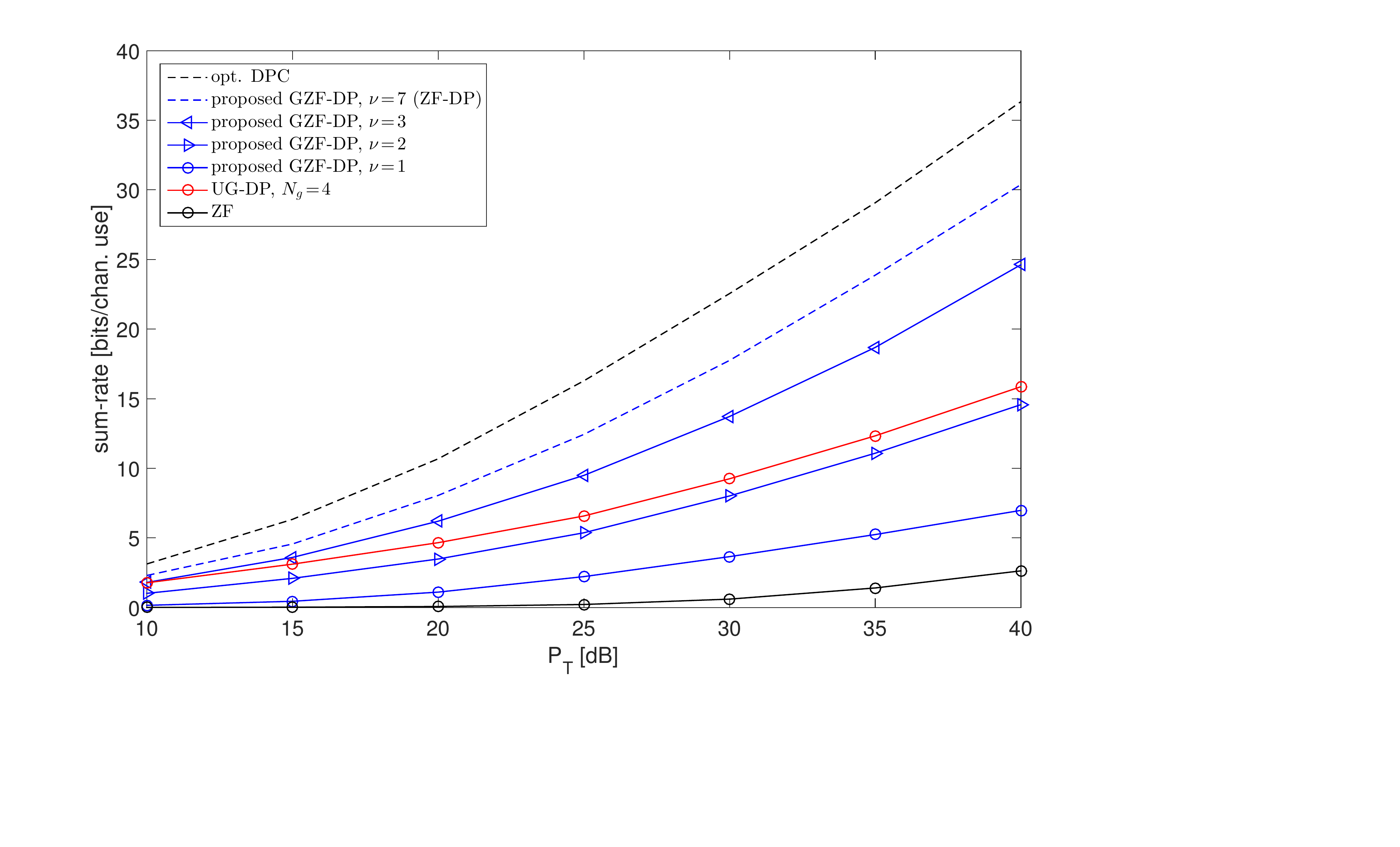}}
\vspace*{-34mm}
\caption{\label{fig13}The sum-rate maximization for the FD-MIMO scenario considered in Fig. \ref{fig12}.}
\vspace*{-10mm}
\end{center}
\end{figure}

\section{Summary}
We have proposed a generalized zero-forcing precoder (GZF) in conjunction with successive dirty-paper coding (DPC), namely, the GZF-DP precoder, for multi-input-single-output (MISO) broadcast channels. Utilizing the successive DPC encoding scheme at the transmitter to cancel the known non-causal interference, the GZF-DP precoder preserves up to $\nu$ interferers for each of the users and results in significant rate-increments. We analyze optimal designs of the proposed GZF-DP precoder both for sum-rate and minimum user-rate maximizations. The optimal GZF-DP precoder designs are solved in closed-forms in relation to optimal power allocations. For the sum-rate maximization, the optimal power allocation can be efficiently found with modified water-filling schemes introduced by inter-user interference, while for the minimum user-rate maximization, the optimal power allocation is solved in closed-from. We have also derived two efficient and low-complexity user-ordering algorithms for the GZF-DP precoder for the sum-rate and minimum user-rate maximizations, respectively. We show through numerical results that, the proposed GZF-DP precoder yields both much higher sum-rate and minimum user-rate compared to the traditional linear ZF precoder and the previous user-grouping based DPC (UG-DP) precoder, and is close to the ZF with full complexity DPC (ZF-DP) precoder.

\end{document}